\documentclass[aoas,preprint]{imsart}

\def\bSig\mathbf{\Sigma}

\newcommand{\D}{{\cal{D}}}
\newcommand{\HS}{\mathrm{HS}}

\usepackage[figuresright]{rotating}
\usepackage{graphicx}
\usepackage{subfigure}
\usepackage{amsmath}
\usepackage{hyperref}
\usepackage{multirow}
\usepackage{algorithm}
\usepackage{algorithmic}
\usepackage{enumerate}
\usepackage{csvsimple}
\usepackage{datatool}
\usepackage{booktabs}
\usepackage{xcolor}
\usepackage{amsfonts}
\usepackage[normalem]{ulem}

\makeatletter

\newcommand{\Rmnum}[1]{\expandafter\@slowromancap\romannumeral #1@}
\makeatother
\RequirePackage[OT1]{fontenc}
\RequirePackage{amsthm,amsmath}
\RequirePackage[authoryear]{natbib}
\RequirePackage[colorlinks,citecolor=blue,urlcolor=blue]{hyperref}

\arxiv{arXiv:1707.03301}

\startlocaldefs
\numberwithin{equation}{section}
\theoremstyle{plain}
\newtheorem{thm}{Theorem}[section]
\endlocaldefs

\begin{document}

\begin{frontmatter}
\title{
Bayesian latent hierarchical model for transcriptomic meta-analysis to detect biomarkers with
clustered meta-patterns of differential expression signals
\thanksref{T1}
}
\runtitle{Bayesian meta-analysis for biomarkers of meta-patterns}

\begin{aug}
\author[1]{\fnms{Zhiguang} \snm{Huo}\ead[label=e1]{zhuo@ufl.edu}},
\author[2]{\fnms{Chi} \snm{Song}\ead[label=e2]{song.1188@osu.edu}}$^,$\thanksref{co},
\and
\author[3]{\fnms{George} \snm{Tseng}\ead[label=e3]{ctseng@pitt.edu}}$^,$\thanksref{co}

\thankstext{co}{To whom correspondence should be addressed.}
\thankstext{T1}{Supported by NIH RO1CA190766 for ZH and GT.}

\runauthor{Z. Huo et al.}

\affiliation[1]{University of Florida}
\affiliation[2]{The Ohio State University} \and 
\affiliation[3]{University of Pittsburgh}

\address{Zhiguang Huo\\
Department of Biostatistics\\ 
University of Florida \\
Gainesville, FL 32611\\
\printead{e1}\\
\phantom{E-mail:\ }}

\address{Song Chi\\
Division of Biostatistics\\
College of Public Health\\ 
The Ohio State University\\ 
Columbus, OH 43210\\
\printead{e2}\\
}

\address{George Tseng\\
Department of Biostatistics, Human Genetics \\ 
and Computational Biology\\
University of Pittsburgh \\
Pittsburgh, PA 15261\\
\printead{e3}\\
}
\end{aug}

\begin{abstract}
Due to the rapid development of high-throughput experimental techniques and fast-dropping prices,
many transcriptomic datasets have been generated and 
accumulated in the public domain.
Meta-analysis combining multiple transcriptomic studies can increase the statistical power to detect disease-related biomarkers.
In this paper, we introduce a Bayesian latent hierarchical model 
to perform transcriptomic meta-analysis.
This method is capable of detecting genes that are differentially expressed (DE) in only a subset of the combined studies,
and the latent variables help quantify homogeneous and heterogeneous differential expression signals across studies.
A tight clustering algorithm is applied to detected biomarkers to capture differential meta-patterns that are informative to guide further biological investigation.
Simulations and three examples, including a microarray dataset from metabolism-related knockout mice, 
an RNA-seq dataset from HIV transgenic rats, 
and cross-platform datasets from human  breast cancer,
are used to demonstrate the performance of the proposed method.

\end{abstract}

\begin{keyword}
\kwd{transcriptomic differential analysis}
\kwd{meta-analysis}
\kwd{Bayesian hierarchical model}
\kwd{Dirichlet process}
\end{keyword}

\end{frontmatter}

\section{Introduction}
\label{s:intro}

With the rapid development of high-throughput experimental techniques and fast-dropping prices,
many transcriptomic datasets have been generated and deposited into public databases.
In general, each dataset contains a small to moderate sample size,
which requires caution in gauging the accuracy and reproducibility of detected biomarkers \citep{simon2003pitfalls,simon2005development,domany2014using}.
Meta-analysis combining multiple transcriptomic studies can increase statistical power and provide robust conclusions from various platforms and sample cohorts \citep{ramasamy2008key}.
\citet{tseng2012comprehensive} presented a comprehensive review of methods and applications in the microarray meta-analysis field and categorized existing methods into combining \textit{p}-values,
combining effect sizes, direct merging (aka mega-analysis),
and combining nonparametric ranks.
In general, meta-analysis can be viewed as two-step information reduction and combination tools for adjusting batch effects
(such as different experimental platforms, protocols, and bias)
across studies to draw a more efficient and accurate conclusion.
This paper focuses on combining \textit{p}-value methods, 
and we will discuss other potential approaches of adjusting batch effects and directly merging studies in the discussion section.

Following conventions in \citet{birnbaum1954combining} and \citet{li2011adaptively},
two hypothesis settings have been considered in meta-analysis.
In the first setting (namely $\HS_A$),
we aim to detect biomarkers that are differentially expressed (DE) in all studies: $H_0: \vec{\theta}\in\bigcap \{ \theta_s=0\}$
versus $H_A: \vec{\theta}\in\bigcap \{ \theta_s \ne 0\}$,
where $\theta_s$ is the effect size of study $s$, 
$1\le s \le S$.
Throughout this manuscript,
effect size refers to unstandardized effect size \citep{cooper2009handbook} (difference of group means or unstandardized regression coefficients).  
In the second setting ($\HS_B$), 
targeted biomarkers are DE in one or more studies:
$H_0: \vec{\theta}\in\bigcap \{ \theta_s=0 \}$ versus 
$H_A: \vec{\theta}\in\bigcup \{ \theta_s \ne 0 \}$.
In view of overly stringent criterion in $\HS_A$ in noisy genomic data and when $S$ is large,
\citet{song2014hypothesis} proposed a robust setting $\HS_r$, requiring that $r$ or more studies are 
DE: $H_0: \vec{\theta}\in\bigcap \{ \theta_s=0 \}$ versus 
$H_A: \vec{\theta}\in\sum \mathbb{I} { \{ \theta_s \ne 0 \} } \ge r$,
where $\mathbb{I} { \{ \cdot \} } $ is an indicator function taking value one if the statement is true and zero otherwise, 
and $r$ is usually pre-estimated with $S/2 \le r \le S$. \citet{song2014hypothesis} also proposed using the $r$th-ordered \textit{p}-value (rOP, $T^{rOP}=p_{(r)}$) to test $\HS_r$.
Generally speaking, $\HS_A$ and $\HS_r$ are most biologically interesting, because they are designed to  detect consensus biomarkers across the combined studies.
However, when heterogeneous differential expression signals across studies are expected 
(e.g., studies come from different tissues or brain regions in the two mouse/rat examples in section~\ref{sec:realData}),
biomarkers detected from $\HS_B$ can also be of interest.
\citet{chang2013meta} conducted a comparative study evaluating twelve popular microarray meta-analysis methods targeting  the three hypothesis settings ($\HS_A$, $\HS_B$, and $\HS_r$).

Strictly speaking,
$\HS_B$ is a sound {\color{black} hypothesis setting}, and statistical tests for $\HS_B$ are easier to develop.
Most popular \textit{p}-value aggregation methods,
such as Fisher's method \citep{fisher1934statistical} and Stouffer's method \citep{stouffer1949american}, 
aim for this $\HS_B$ setting.
In the literature, $\HS_B$ is also called a conjunction or intersection hypothesis \citep{benjamini2008screening}.
On the other hand, 
$\HS_A$ is a {\color{black}somewhat defective hypothesis setting in the sense that the null and alternative spaces are not complementary}.
For example, if we apply the maximum \textit{p}-value test ($T^{maxP} = \max_s p_s$,
where $p_s$ is the \textit{p}-value for study $s$) for $\HS_A$, 
and we reasonably assume that \textit{p}-values independently follow UNIF(0,1) under the null hypothesis,
then the null distribution of $T^{maxP}$ is Beta(S,1). This test is anticonservative 
when there exist genes DE in some but not all of the studies.
The problem mainly comes from the noncomplementary null and alternative spaces in $\HS_A$ 
(and also $\HS_r$).
A more appropriate {\color{black} hypothesis setting for the same purpose} is 
$\HS_{\bar{A}} \equiv H_0: \vec{\theta}\in\bigcup \{ \theta_s=0 \}$ versus 
$H_A: \vec{\theta}\in\bigcap \{ \theta_s \ne 0 \}$ 
(and $\HS_{\bar{r}} \equiv H_0: \sum \mathbb{I} \{ \theta_s \ne 0 \} < r$
versus $H_A: \sum \mathbb{I} \{ \theta_s \ne 0 \} \ge r$).
\citet{benjamini2008screening} proposed a legitimate but conservative test for $\HS_{\bar{A}}$ and $\HS_{\bar{r}}$.
In genomic applications, 
the composite null hypotheses of $\HS_{\bar{A}}$ and $\HS_{\bar{r}}$ are complicated 
by the fact that genes can be differentially expressed in up to $S-1$ (for $\HS_{\bar{A}}$) or $r-1$ (for $\HS_{\bar{r}}$) studies, 
with different levels of effect sizes.
Under such a scenario, 
it becomes very difficult to characterize the null distribution for hypothesis tests in a frequentist setting.
Theoretically it is necessary to borrow differential expression information across genes to significantly improve statistical power.
Bayesian hierarchical modeling can provide a convenient solution for this purpose.
\citet{efron2008microarrays} and \citet{efron2009empirical}  applied empirical Bayes methods to control the false discovery rate (FDR) in single microarray studies.  
\citet{muralidharan2010empirical} further extended these works to allow for the simultaneous modeling of both empirical null and alternative distribution of the test statistics. 
 \citet{zhao2014bayesian} incorporated pathway information to select genes using the Bayesian mixture model.  
Despite these successful applications, a Bayesian method that combines multiple studies and detects DE genes based on various meta-analysis hypothesis settings is yet to be developed.
In this paper, we propose a Bayesian latent hierarchical model 
(BayesMP, named after the Bayesian method for meta-patterns) that uses a nonparametric Bayesian method to effectively combine information across genes 
for direct testing for $\HS_{\bar{A}}$ 
(as well as $\HS_B$ and $\HS_{\bar{r}}$) on a genome-wide scale.
In simulations, we show successful Bayesian false discovery rate control of BayesMP, 
while the original maxP and rOP method using a beta null distribution loses FDR control for the $\HS_{\bar{A}}$ and $\HS_{\bar{r}}$ settings.

Traditionally, meta-analysis aims to pool consensus signals to increase statistical power (e.g., by fixed or random effects models).
Recently, researchers have recognized the existence of heterogeneous signals among cohorts
and the importance of their characterization in meta-analysis.
For example, figure~\ref{fig:mouse2_modulesA} shows three modules of detected biomarkers from
the RNA-seq HIV transgenic rat data using BayesMP and meta-pattern clustering (see section~\ref{s:mouseHIV} for details).
Modules  \Rmnum{1} and  \Rmnum{2} are consensus biomarkers that are either all down-regulated or
all up-regulated across three brain regions.
In contrast, the biomarkers in module \Rmnum{3} are down-regulated in HIP but up-regulated in PFC and STR.
Such  biomarkers are somewhat expected because it is well known that different brain regions are responsible for different functions such as reasoning, recognition, visual inspection, and memory/speech. 
Several approaches, such as the adaptive weighting (or subset) method \citep{li2011adaptively, bhattacharjee2012subset}
and lasso variable selection \citep{li2014meta},
have been proposed for quantifying and inferring such heterogeneity.
In the adaptively weighted Fisher's method (AW-Fisher; \Citealt{li2011adaptively}), 
for example,
heterogeneity of differential expression signals in each study is categorized by $w_{gs}$ as 0 or 1 weights
(1 representing differential expression for gene $g$ in study $s$, 
and 0 for nondifferential expression).
Specifically, AW-Fisher considers weighted Fisher's statistics 
--- that is, $T(\vec{W}_g) = -2\sum_{s=1}^S w_{gs} \cdot \log(p_{gs})$,
where $\vec{W}_g = (w_{g1},\ldots, w_{gS})$ is the vector of 0 or 1 weights reflecting gene-specific heterogeneous contribution of each study and 
$p_{gs}$ is the \textit{p}-value of gene $g$ in study $s$ ---
and adaptively searches the best weight vector for gene $g$ by minimizing the resulting \textit{p}-value:
$\widehat{\vec{W}_g} = \arg\min_{\vec{W}_g}
p(T({\vec{W}}_g)) = \arg\min_{\vec{W}_g} 1 - F^{-1}_{\chi^2_{2\cdot\sum_{s=1}^S w_{gs}}} (T(\vec{W}_g))$, 
where $F^{-1}_{\chi^2_{w}}$ is the inverse CDF of chi-squared distribution with degree of freedom $w$.
The 0 or 1 weights estimated from AW-Fisher help cluster detected biomarkers by their differential expression meta-patterns 
but have a disadvantage of hard-thresholding without quantification of variability.
When $S$ is large, 
the number of all possible $2^S - 1 $  weight combinations also makes the problem intractable.
In BayesMP, the differential expression indicators naturally come with variability estimates from posterior distribution
(see a confidence score to be defined later in section~\ref{sec:decision}).
In BayesMP, we also adopt a cosine dissimilarity measure on these posterior distributions and apply tight clustering \citep{tseng2005tight} to identify biomarkers of different meta-patterns
(e.g., see the three modules of biomarkers in figure~\ref{fig:mouse2_modules}).
Unsupervised clustering of the expression pattern across studies identifies data-driven modules of biomarkers of different meta-patterns and provides 
interpretable results for further biological investigation.
For example,
it is interesting to investigate why biomarkers in module \Rmnum{3} are down-regulated in HIP but up-regulated in PFC and STR.
We note here that our proposed cluster analysis to categorize detected biomarkers by studying heterogeneity in meta-analysis is a relatively novel concept.
It is different from popular practices of clustering genes for identifying coexpression gene modules or clustering samples for discovering disease subtypes (e.g., \cite{huo2016meta}).

\begin{figure}[htbp]
	\label{fig:mouse2_modules}
	\centering
	\subfigure[Heatmap]{
		\label{fig:mouse2_modulesA}
		\includegraphics[height=0.6\columnwidth]{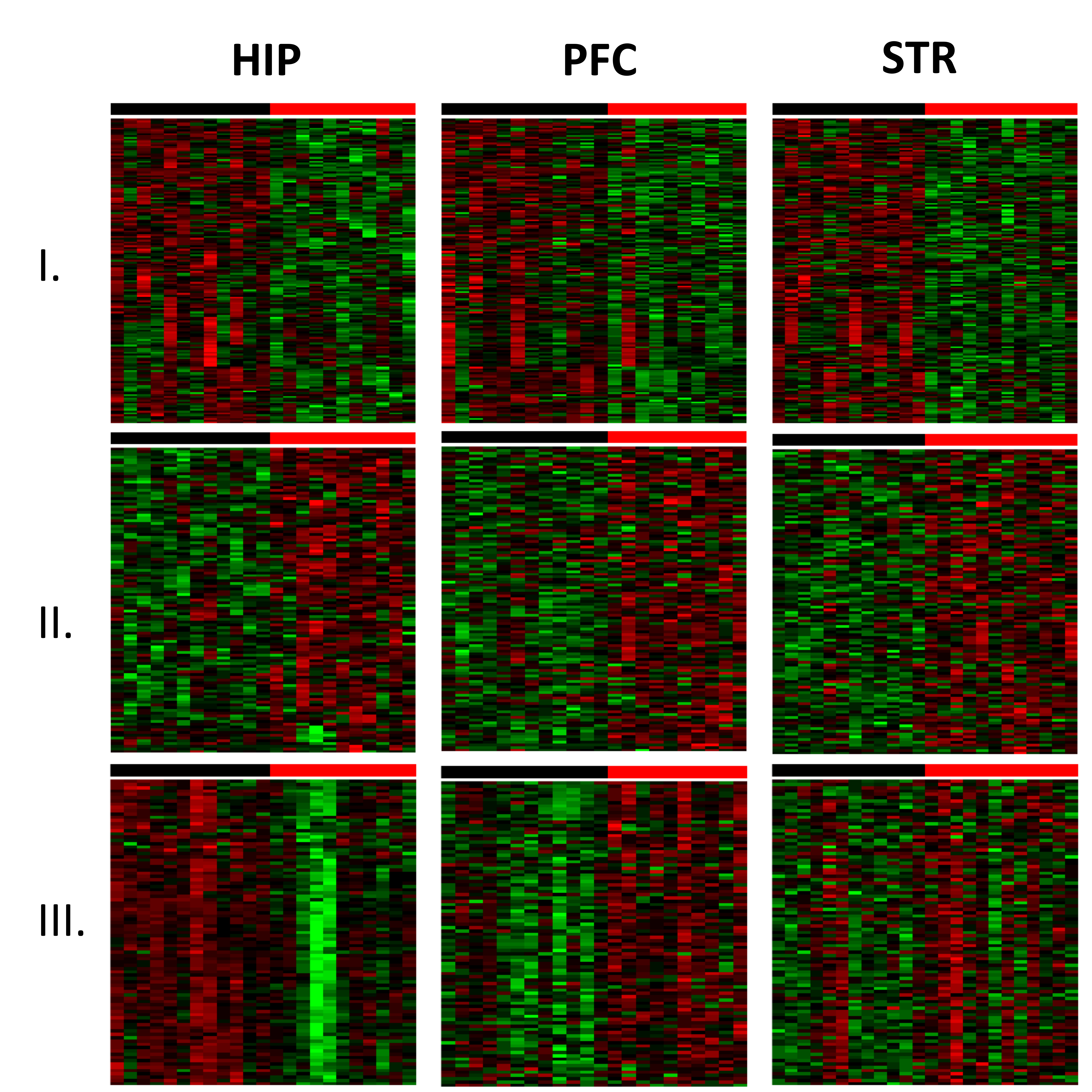}
	}
	\subfigure[CS]{
		\label{fig:mouse2_modulesB}
		\includegraphics[height=0.6\columnwidth]{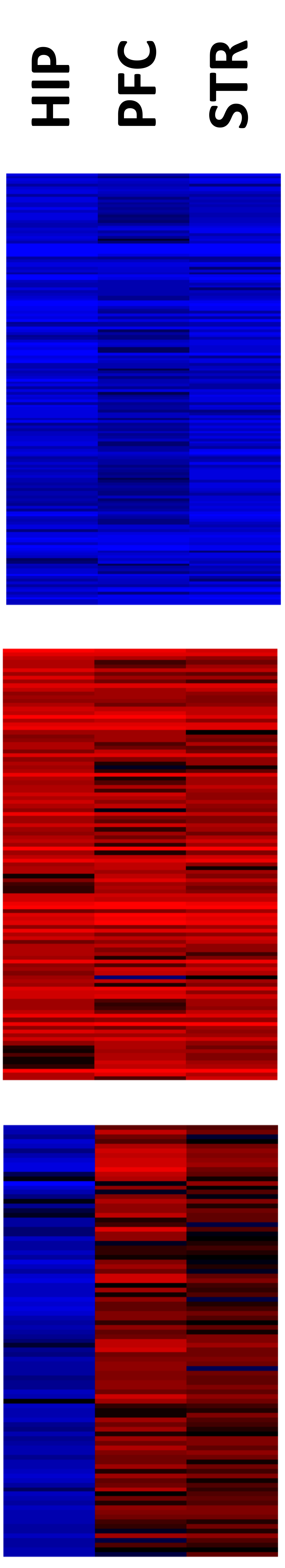}
	}
	\subfigure[bar plot]{
		\label{fig:mouse2_modulesC}
		\includegraphics[height=0.6\columnwidth]{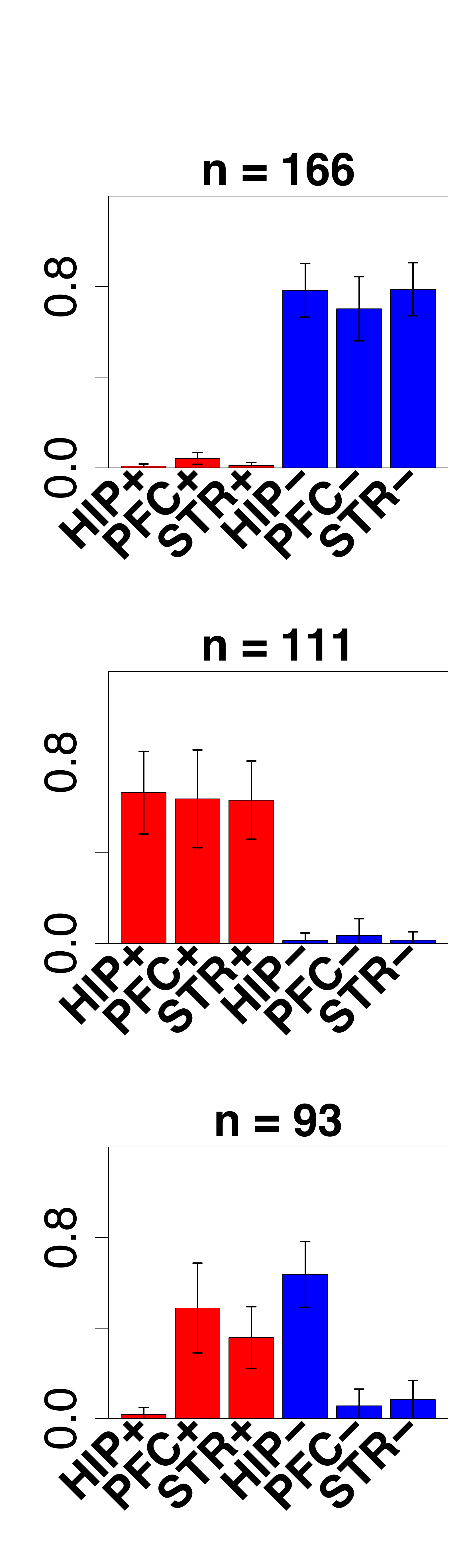}
	}
	\caption{
	Three meta-pattern modules of biomarkers from HIV transgenic rats example.
	Each row (module \Rmnum{1}, \Rmnum{2}, and \Rmnum{3}) shows a set of detected biomarkers showing similar meta-pattern of differential signals.
	\ref{fig:mouse2_modulesA} Heatmaps of detected genes (on the rows) and samples (on the columns) for each brain region (HIP, PFC, or STR), where each brain region represents a study (i.e., HIP for $s=1$, PFC for $s=2$, STR for $s=3$).	
	The black color bar on top represents F334 rats (control), 
	and the red color bar on top represents HIV transgenic rats (case).
	\ref{fig:mouse2_modulesB} Heatmaps of confidence scores (CS) (genes on the rows and three studies on the columns). 
	The confidence score is described in section~\ref{sec:decision}, which ranges from -1 (blue color for down-regulation) to 1 (red color for up-regulation).
	\ref{fig:mouse2_modulesC} Bar plots of mean posterior probability for $Y_{gs} =1$ (red color for up-regulation) and $Y_{gs} = -1$ (blue color for down-regulation) for each module in each brain region.
	Error bar represents standard deviation across all genes in the module.
	The number of genes is shown on top of each bar plot.
	}
\end{figure}

In this paper, 
section~\ref{s:methods} establishes the methodology, estimation, and inference of BayesMP.
Section~\ref{sec:simu} evaluates the performance of the proposed method using simulation datasets.
Section~\ref{sec:realData} shows the application to three real examples.
Finally, section~\ref{sec:conclusion} provides conclusions and discussions.

\section{Methods}
\label{s:methods}

For the ease of discussion, we focus on detecting DE genes in two-class comparison in this manuscript.
The method can be easily extended for studies with numerical or survival outcomes.
In a meta-analysis combining $S$ studies with $G$ genes, we denote $p_{gs}$ as the one-sided \textit{p}-value 
testing for down-regulation for gene $g$ in study $s$, where $1\le g \le G$ and $1\le s\le S$.
These \textit{p}-values can be calculated from SAM \citep{tusher2001significance} 
or \textit{limma} \citep{smyth2005limma} for microarray studies (or RNA-seq studies with RPKM data), and \textit{edgeR} \citep{robinson2010edger} or \textit{DEseq} \citep{anders2010differential}  for RNA-seq studies with count data.
As a result, our model is flexible to mixed studies of different platforms (e.g., microarray or RNA-seq) and study designs (e.g., case-control, numerical outcome, or survival outcome). 
Throughout this manuscript, we use \textit{limma} and \textit{edgeR} to obtain the \textit{p}-values.
For modeling convenience, we transform the one-sided \textit{p}-values into \textit{Z}-statistics, i.e.
$Z_{gs} = \Phi^{-1}(p_{gs})$,
where $\Phi^{-1}(\cdot)$ is the inverse cumulative density function (CDF) of standard Gaussian distribution.
$Z_{gs}$ is the input data for BayesMP.

\subsection{Bayesian hierarchical mixture model} 
\label{sec:BayesianModel}
Denote by $\theta_{gs}$ the effect size of gene $g$ in study $s$ and by $Y_{gs}$ an indicator variable s.t.
$Y_{gs}=1$ if $\theta_{gs}>0$ (up-regulation),
$Y_{gs}=-1$ if $\theta_{gs}<0$ (down-regulation),
and $Y_{gs}=0$ if $\theta_{gs}=0$ (non-DE gene).
We assume that the \textit{Z}-statistics from study $s$ are sampled from a mixture distribution with three mixing components depending on $Y_{gs}$: 
 $f^{(s)}(Z_{gs}|Y_{gs})= f_0^{(s)}(Z_{gs})\cdot \mathbb{I}(Y_{gs}=0) + f_{+1}^{(s)}(Z_{gs})\cdot \mathbb{I}(Y_{gs}=1) + f_{-1}^{(s)}(Z_{gs})\cdot \mathbb{I}(Y_{gs}=-1)$,
where $f^{(s)}(\cdot)$ is the pdf of \textit{Z}-statistics in study $s$, and $f_0^{(s)}$, $f_{+1}^{(s)}$ and $f_{-1}^{(s)}$ are the pdfs of the null, positive, and negative components in study $s$.

 In most situations, if an appropriate statistical test is adopted, one can expect that $p_{gs}\sim \textrm{Unif}(0,1)$
if gene $g$ in study $s$ is not DE and hence reasonably assume $f_0^{(s)}\equiv \textrm{N}(0,1)$.  
If the \textit{p}-value distribution is not uniform under null hypothesis, one can also empirically estimate $f_0^{(s)}$ following \citet{efron2004large}.  
Throughout this manuscript, we use theoretical null $\textrm{N}(0,1)$, 
and we put discussion about this choice in the conclusion section.
Unlike $f_0^{(s)}$, $f_{\pm1}^{(s)}$ are usually unknown, 
and their estimation is not trivial.  
To account for the complex composition of alternative $f_{\pm1}^{(s)}$ (potentially several subgroups exist in the alternative space), we model them nonparametrically by assuming they are also mixtures of distributions using Dirichlet processes (DPs). 
DPs are widely discussed and applied in the literature \citep{neal2000markov, muller2004nonparametric},
and density estimation using DPs has also been discussed \citep{escobar1995bayesian}.  
In our model, when $Y_{gs}\ne 0$, we assume $Z_{gs}\sim \mbox{N}(\mu_{gs},1)$, and $\mu_{gs}$ follows distribution $G_{s+}$ or $G_{s-}$ generated from DPs.  Specifically, the generative process of $Z_{gs}$ given $Y_{gs}=\pm 1$ is as follows:

\begin{equation*}
\centering
\begin{split}
& G_{s+}\sim \textrm{DP}(G_{0+}, \alpha_{+}) \mbox{ and } G_{s-}\sim \textrm{DP}(G_{0-}, \alpha_{-}).\\
& \mu_{gs}\sim\left\{\begin{split}G_{s+} & \mbox{ if } Y_{gs}=1,\\
G_{s-} & \mbox{ if } Y_{gs}=-1.\end{split}\right.\\
& Z_{gs}\sim \mbox{N}(\mu_{gs}, 1).
\end{split}
\end{equation*}

$\textrm{DP}(G, \alpha)$ denotes a Dirichlet process with base distribution $G$ and concentration parameter $\alpha$, 
and $G_{0+}$ ($G_{0-}$) denotes normal density $\mbox{N}(0,\sigma_0^2)$ left (right) truncated at 0.  
We find that the selection of $\sigma_0$ and $\alpha_{\pm}$ do not much affect the performance of this model in simulation [see details in supplementary table~S3 \citep{supp}].  
It should be noted that we assume $Z_{gs}\sim \mbox{N}(\mu_{gs}, 1)$, 
where the variance is fixed at 1 to ensure that $f^{(s)}_{+1}/f^{(s)}_0$ ($f^{(s)}_{-1}/f^{(s)}_0$) is monotonically increasing (decreasing) while $Z_{gs}>0$ ($Z_{gs}<0$), 
which in turn guarantees the posterior probability of gene $g$ being DE in study $s$ to increase as $|Z_{gs}|$ increases (see Theorem~\ref{thm:monotone1}).  
In addition, this assumption makes the MCMC simpler and hence speeds up the algorithm.

\begin{thm}
\label{thm:monotone1}
If $f_k(x)\equiv\mbox{N}(\mu_k,\sigma_k^2)$ with $\mu_k>0$, $\sigma_k^2 \ge 1$ and $1\le k\le K$,
 and $f_0(x)=\mbox{N}(0,1)$,
then $\sum_{k=1}^K w_kf_k/f_0$ is monotonically increasing when $x\ge0$, 
where $w_1, \ldots, w_K>0$ and 
$\sum_{k=1}^K w_k=1$.
\end{thm}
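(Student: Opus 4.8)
The plan is to reduce the statement to the monotonicity of each individual likelihood ratio $g_k := f_k/f_0$. Writing $G(x) = \sum_{k=1}^K w_k g_k(x)$, differentiation gives $G'(x) = \sum_{k=1}^K w_k\, g_k'(x)$, and since every weight satisfies $w_k > 0$, it suffices to show that each $g_k$ is increasing on $[0,\infty)$: a positive linear combination of increasing functions is increasing. Note that the constraint $\sum_k w_k = 1$ is not actually needed for this argument beyond ensuring the weights are positive, so I would not invoke it.

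Second, I would put $g_k$ in closed form. Cancelling the two Gaussian densities and combining the exponents yields
\[
g_k(x) = \frac{1}{\sigma_k}\exp\!\left(\frac{(\sigma_k^2-1)x^2 + 2\mu_k x - \mu_k^2}{2\sigma_k^2}\right),
\]
so that $g_k$ is a positive constant times the exponential of a quadratic in $x$. Because $\exp(\cdot)$ is positive and strictly increasing, the sign of $g_k'$ agrees with the sign of the derivative of the exponent; equivalently, it is cleanest to differentiate $\log g_k$ and study its sign.

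Third, differentiating the exponent gives
\[
(\log g_k)'(x) = \frac{(\sigma_k^2-1)x + \mu_k}{\sigma_k^2}.
\]
For $x \ge 0$, the hypothesis $\sigma_k^2 \ge 1$ forces $(\sigma_k^2-1)x \ge 0$, while $\mu_k > 0$ contributes a strictly positive term; hence $(\log g_k)'(x) > 0$ and each $g_k$ is strictly increasing on $[0,\infty)$. The positively weighted sum $G$ then inherits strict monotonicity, which is even stronger than the claimed monotone increase.

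The computation is entirely routine, so there is no genuine obstacle; the only conceptual point is recognizing that $\sigma_k^2 \ge 1$ is exactly the condition that prevents the quadratic contribution $(\sigma_k^2-1)x$ from eventually dominating and driving the log-derivative negative. Were $\sigma_k^2 < 1$ allowed, the ratio would turn over and decrease for large $x$, which is precisely why the model pins the component variance at the boundary value $1$: there $(\log g_k)'(x) = \mu_k > 0$ holds directly, and larger variances only help.
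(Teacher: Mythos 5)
Your argument is correct and follows essentially the same route as the paper's: compute $\log g_k$ for $g_k = f_k/f_0$, observe that its derivative $\bigl((\sigma_k^2-1)x+\mu_k\bigr)/\sigma_k^2$ is positive on $x\ge 0$ under the hypotheses, and conclude by noting that a positively weighted sum of increasing functions is increasing. The only difference is cosmetic (you also note that the normalization $\sum_k w_k = 1$ is not needed and that the monotonicity is strict), so there is nothing to flag.
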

\begin{proof}
$\forall 1\le k\le K$, $f_k(x)=1/(\sigma_k\sqrt{2\pi}) \exp(-(x-\mu_k)^2/(2\sigma_k^2))$,  
$f_0(x) = 1/\sqrt{2\pi} \exp(-x^2/2)$,
So $g_k(x)=f_k(x)/f_0(x)=1/\sigma_k \exp(-(x-\mu_k)^2/(2\sigma_k^2))+x^2/2)$
and $\log g_k(x) = -(x-\mu_k)^2/(2\sigma_k^2) + x^2/2 - \log \sigma_k$.
By taking derivative of $\log g_k(x)$, we get 
$[\log g_k(x)]'=-(x-\mu_k)/\sigma_k^2 + x = (1 - 1/\sigma_k^2)x + \mu_k/\sigma_k^2 > 0$,
when $x \ge 0$ (actually $x > \mu_k/(1-\sigma_k^2)$ is enough).
Therefore $g_k=f_k/f_0$ is monotonically increasing when $x \ge 0$, and $\sum_{k=1}^K w_kf_k/f_0$ is also monotonically increasing when $x \ge 0$.
\end{proof}

In order to borrow information across studies, we further assume that $Y_{gs}$ is generated depending on 
(1) the prior probability $\pi_g$ that gene $g$ is a DE gene and 
(2) the conditional probability $\delta_g$ for gene $g$
in study $s$ being up-regulated (or $1 - \delta_g$ for down-regulated),
given gene $g$ is DE.
Specifically, we assume $\vec{W}_{gs}~\sim~\mbox{Mult}\left(1,
(1 - \pi_g,
\pi_g^+,
\pi_g^-)
\right)
$
and 
$Y_{gs} =  \vec{W}_{gs}
 \cdot 
(0,1,-1)
$,
where $\pi_g^+ = \pi_g \delta_g$, $\pi_g^- = \pi_g (1 - \delta_g)$ and $\cdot$ is the inner product of two vectors.
Given $Y_{gs}=y$, $Z_{gs}$ is generated from $f_y^{(s)}(Z)$.
The graphical representation of the full generative model is shown in figure~\ref{fig:graphicalModel}. 

We assume that each gene $g$ is DE in different studies in the same probability $\pi_g$, i.e., $\pi_g=\Pr(Y_{gs}\ne0)$, 
and $\pi_g \sim \mbox{Beta}(\gamma,1 - \gamma)$.  
$\gamma$ can be interpreted as the proportion of DE genes pooling all studies, 
since the expectation of $\pi_g$ from this prior is $\gamma$.
We further set the prior of $\gamma$ being uniform distribution ($\gamma \sim \mbox{UNIF}(0,1)$).

For each gene $g$, define $\delta_g = \Pr(Y_{gs}=1|\mbox{gene } g\mbox{ is a DE gene})$.
We assume $\delta_g \sim \mbox{Beta}(\beta, \beta)$.
We set $\beta = 1/2$ in this paper, which gives a noninformative prior.
Note that this conditional probability provides flexibility for a DE gene to contain conflicting differential expression directions 
(i.e., up-regulation in one study but down-regulation in another study; 
e.g., module \Rmnum{3} in figure~\ref{fig:mouse2_modules})

\begin{figure}[htbp]
	\centering
	\includegraphics[width=0.6\columnwidth]{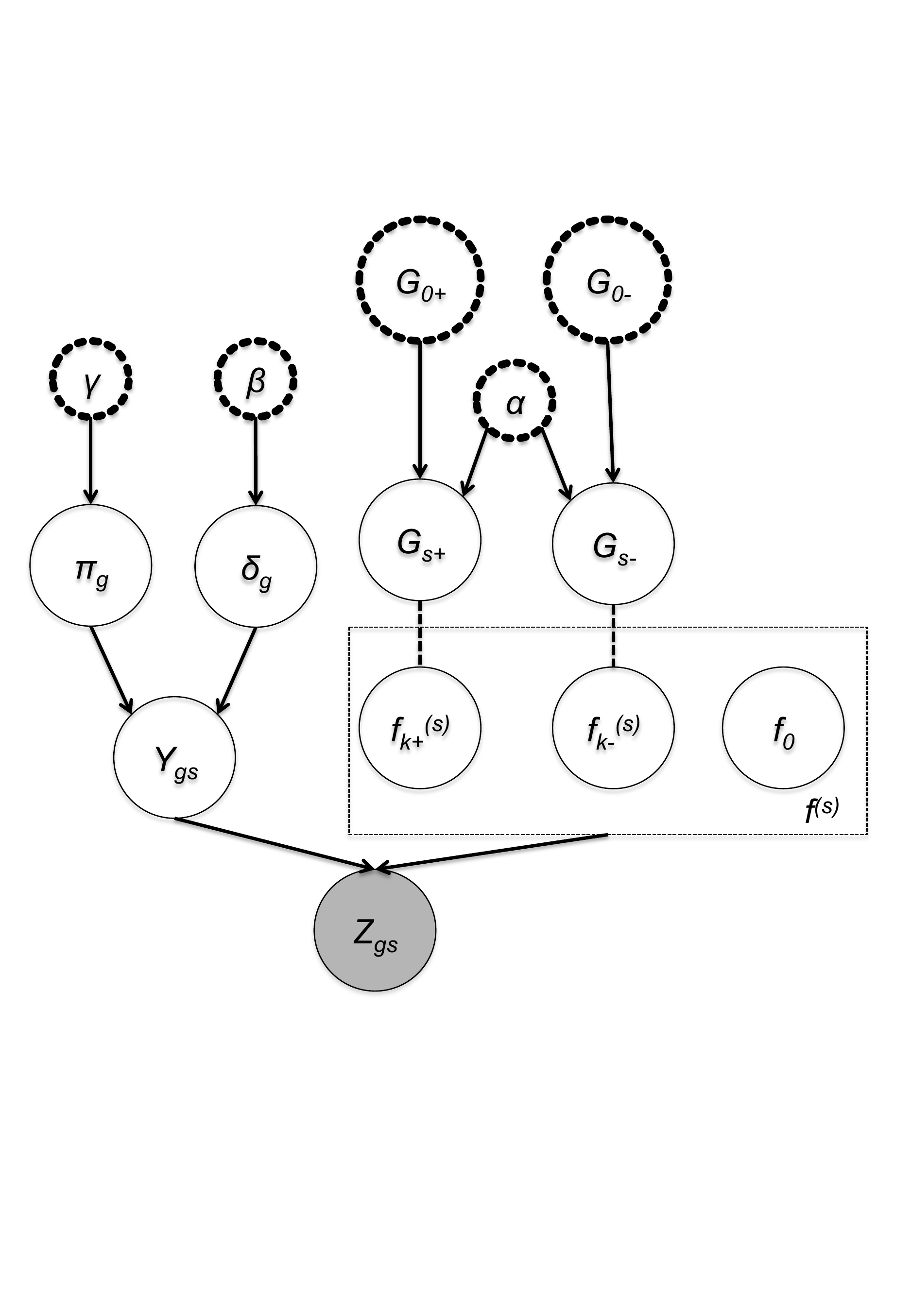}
	\caption{Graphical representation of Bayesian latent hierarchical model.
	Shaded nodes are observed variables.
	Dashed nodes are pre-estimated/fixed parameters.
	Arrows represent generative process.
	Dashed lines represent equivalent variables.
	$s$ is the study index, and $g$ is the gene index.  }
	\label{fig:graphicalModel}
\end{figure}

\subsection{Model fitting}

Since conjugate priors were used in the generative model, 
we can generate  the posterior samples efficiently using the MCMC procedure.
In order to update the DP with an infinite number of components, we take the alternative view of DP as the Chinese restaurant process. 
We define $C_{gs}\in \{\ldots, -3, -2, -1, 0, 1, 2, 3, \ldots\}$ as the auxiliary component variable, and $Y_{gs}$ is determined by the sign of $C_{gs}$.
Specifically,  if $C_{gs}=0$, we set $Y_{gs}=0$;
if $C_{gs}=k$ with $k>0$, then we set $Y_{gs}=1$, 
and sample $\mu_{gs}$ from the $k$th component (or $k$th table in Chinese restaurant process) of $G_{s+}$; 
and similarly, if $C_{gs}=k$ with $k<0$, then we set $Y_{gs}=-1$ and sample $\mu_{gs}$  from the $k$th component of $G_{s-}$.
The following steps provide details of our MCMC iterations:
\begin{enumerate}
\item Update $\pi_g$'s:
$$\pi_g | Y_{gs} \sim \mbox{Beta}(\gamma + Y_g^+ + Y_g^-, S - Y_g^+ - Y_g^-  + 1 - \gamma),$$
where $Y_g^+ = \sum_s \mathbb{I}(Y_{gs}=1)$ and $Y_g^- = \sum_s \mathbb{I}(Y_{gs}=-1)$.
\item Update $\delta_g$'s:
$$\delta_g| Y_{gs} \sim \mbox{Beta}(\beta + Y_g^+, \beta + Y_g^-).$$
\item Update $Y_{gs}$'s: 

First update 
$C_{gs}$'s s.t.
$$	\Pr(C_{gs}=k | C_{-g,s}, Z_{gs},  \pi_g^{\pm})  \propto h_k^{(s)}(Z_{gs} | C_{-g,s}) (\pi_g^+)^{\mathbb{I}(k>0)} (\pi_g^-)^{\mathbb{I}(k<0)} (1 - \pi_g)^{\mathbb{I}(k=0)},
$$
where $C_{-g,s}$ denotes all the $C$'s in study $s$ excluding gene $g$.
Note that $h_k^{(s)}$ can be calculated directly 
following the convention of algorithm 3 in \cite{neal2000markov}.
Details of $h_k^{(s)}(Z_{gs} | C_{-g,s})$ are given in supplementary~section~\Rmnum{3} \citep{supp}.
Finally, we set $Y_{gs} = \mbox{sgn}({C_{gs}})$,
where $\mbox{sgn}(\cdot)$ is the sign function.

\item 
Because there is no immediate conjugate prior for $\gamma$,
we sample $\gamma$ using the Metropolis-Hastings algorithm such that
$$\gamma \propto \prod_{g=1}^G \mbox{dBeta}(\pi_g; \gamma, 1 - \gamma),$$
where $\mbox{dBeta}(x; a, b)$ is the probability density function of beta distribution 
evaluated at $x$, with the shape parameters $a$ and $b$.
For details see supplementary section~\Rmnum{1} \citep{supp}.

\end{enumerate}

\subsection{Decision space and inference making}
\label{sec:decision}
A main benefit of Bayesian modeling is its capability of making inference by statistical decision theory,
a generalized framework that covers the traditional hypothesis testing framework as a special case \citep{berger2013statistical}.
Take $\HS_{\bar{A}}$ from section~\ref{s:intro} as an example.
Traditional hypothesis testing considers null hypothesis $H_0: \vec{\theta}_g \in \Omega_{\bar{A}}^0 $, 
where $\Omega_{\bar{A}}^0 =\{\vec{\theta}_g:  \sum_{s=1}^S\mathbb{I}(\theta_{gs}\ne 0)<S \}$ and
$\vec{\theta}_g =(\theta_{g1}, \cdots, \theta_{gS})$, 
versus alternative hypothesis $H_A: \vec{\theta}_g\in\Omega_{\bar{A}}^1$
where $\Omega_{\bar{A}}^1 =\{\vec{\theta}_g:  \sum_{s=1}^S\mathbb{I}(\theta_{gs}\ne 0)=S \}$.
When observed data are unlikely to happen 
(i.e., type \Rmnum{1} error controlled at $5\%$) under null hypothesis,
we reject the null hypothesis.
One notable feature is that traditional hypothesis testing views null and alternative hypothesis spaces differently. 
The decision of hypothesis testing is only based on the null hypothesis --- either to reject or to accept. 
The alternative hypothesis plays little role in decision making.
In view of decision theory framework, 
a decision space (aka action space) is designed as $\D_{\bar{A}}=(\Omega_{\bar{A}}^0, \Omega_{\bar{A}}^1)$. 
The inference generates a decision function $f$ that maps from the observed data space $Z$ to $\D_{\bar{A}}$ 
(i.e., $f:Z\rightarrow \D_{\bar{A}}$). 
Under this framework,
the type \Rmnum{1} error can be expressed as $\Pr(f(Z) = \Omega_{\bar{A}}^1| \vec{\theta}_g \in \Omega_{\bar{A}}^0)$,
and the type \Rmnum{2} error as $\Pr(f(Z) = \Omega_{\bar{A}}^0| \vec{\theta}_g \in \Omega_{\bar{A}}^1)$.
Hypothesis testing is a special case under this framework, 
with adequate type \Rmnum{1} error
control to determine the decision function $f$.
Unlike hypothesis testing,
decision theory treats $\Omega_{\bar{A}}^0$ and $\Omega_{\bar{A}}^1$ equally, because in decision theory, the decision
is made through cost analysis, which weighs the costs of making wrong decisions in both spaces. 
One can easily design a realistic loss (cost) function based on the two types of errors to determine their balance and achieve the best decision function.
In this paper, in order to make a fair comparison with classical hypothesis testing,
we use posterior probabilities from Bayesian modeling and adopt a false discovery control described by
\cite{newton2004detecting} to determine the decision function.

Denote by $\xi_g=\Pr(\vec{\theta}_g\in \Omega_{\bar{A}}^0|Z)= 1-\Pr(\vec{\theta}_g\in \Omega_{\bar{A}}^1|Z)$,
which is local FDR \citep{efron2002empirical} by definition.
Given a threshold $\kappa$,
we declare gene $g$ as a DE gene if $\xi_g \le \kappa$ and the expected number of false discoveries is 
$\sum_g \xi_g \mathbb{I} (\xi_g \le \kappa)$.
The false discovery rate from Bayesian modeling is defined as $\frac{\sum_g \xi_g \mathbb{I} (\xi_g \le \kappa)}{ \sum_g \mathbb{I} (\xi_g \le \kappa)}$ \citep{newton2004detecting}.
In simulation and real data applications,
we compare the performance of FDR control from traditional hypothesis testing and FDR control from Bayesian modeling.
Note that we can consider $\D_{B}=(\Omega_B^0, \Omega_B^1)$, where 
$\Omega_B^0=\{\vec{\theta}_g:  \sum_{s=1}^S\mathbb{I}(\theta_{gs}\ne 0)=0 \}$ and 
$\Omega_B^1=\{\vec{\theta}_g:  \sum_{s=1}^S\mathbb{I}(\theta_{gs} \ne 0)>0\}$,
to correspond to $\HS_B$; and 
$\D_{\bar{r}}=(\Omega_{\bar{r}}^0, \Omega_{\bar{r}}^1)$, where 
$\Omega_{\bar{r}}^0=\{\vec{\theta}_g:  \sum_{s=1}^S\mathbb{I}(\theta_{gs} \ne 0) < r\}$ and 
$\Omega_{\bar{r}}^1=\{\vec{\theta}_g:  \sum_{s=1}^S\mathbb{I}(\theta_{gs} \ne 0)\ge r\}$,
to correspond to $\HS_{\bar{r}}$.

Finally, for a declared DE gene,
we are given the posterior probability of whether gene $g$ in study $s$ is 
a non-DE gene ($\Pr(Y_{gs}=0|Z)$), 
an up-regulated gene ($\Pr(Y_{gs}=1|Z)$),
or a down-regulated gene ($\Pr(Y_{gs}=-1|Z)$). 
We propose a gene- and study-specific confidence score $V_{gs}=\Pr(Y_{gs}=1|Z)-\Pr(Y_{gs}=-1|Z)$, 
which ranges between $-1$ and $1$.
We are confident that gene $g$ is up-regulated in study $s$ if $V_{gs}$ is close to $1$, 
and vice versa when $V_{gs}$  is close to $-1$.
See figure~\ref{fig:mouse2_modulesB} for an example.

\subsection{Biomarker clustering for meta-patterns of homogenous and heterogenous differential signals}
\label{sec:tightclust}

Several recently developed meta-analysis methods \citep{li2011adaptively, bhattacharjee2012subset, li2014meta}
provide modeling of homogeneous and heterogeneous differential signals.
The 0 or 1 differential expression indicators (e.g., $\vec{w}_g$ in AW-Fisher) allow further biological investigation on consensus biomarkers as well as study-specific biomarkers.
However, when $S$ becomes large, the number of biomarker categories grows exponentially to $2^S-1$,
and the biomarker categories become intractable.
In BayesMP, the posterior probability of the differential expression indicator (i.e., $\Pr(Y_{gs}|Z)$)
 provides probabilistic soft conclusions.
After we obtain a list of biomarkers under certain global FDR control (e.g., 5\% or 1\%),
we apply the tight clustering algorithm \citep{tseng2005tight} to generate data-driven biomarker modules.
Tight clustering is a resampling-based algorithm built upon $K$-means or $K$-medoids.
This method aggregates information from repeated clustering of subsampled data to directly identify tight clusters 
(i.e., sets of biomarkers with small dissimilarity)
and does not force every biomarker into a cluster. 
It can be applied to any dissimilarity matrix 
if $K$-medoids is used. 
Pathway enrichment analysis is then applied to functionally annotate each biomarker module.
The resulting biomarker modules of different meta-patterns will greatly facilitate interpretation and hypothesis generation for further biological investigation.
In the first two real data applications,
for example,
heterogeneous meta-patterns of biomarkers are expected from the nature of multi-tissue or multi-brain-region design across studies.
Biomarkers up-regulated in one brain region but non-DE (or even down-regulated) in another brain region is of great interest.
It should be noted that in the third breast cancer data application, 
meta-patterns still help characterize the heterogeneity of different cohorts (e.g., differences of study population and probe design), 
even though this heterogeneity is hopefully minimal since these studies focus on the same disease with homogeneous tissue type.

To apply tight clustering,
we need to define a dissimilarity measure for any pair of genes.
Denote by $\vec{U}_{gs}$ the posterior probability vector for $Y_{gs}$:
$\vec{U}_{gs}=(\Pr(Y_{gs}=1|Z), \Pr(Y_{gs}=-1|Z), \Pr(Y_{gs}=0|Z))$,
which can be estimated by MCMC samples. 
For two genes $i$ and $j$,
we first calculate the dissimilarity of $\vec{U}_{is}$ and $\vec{U}_{js}$ in study $s$ and then average over study index $s$.
The dissimilarity measure between $\vec{U}_{is}$ and $\vec{U}_{js}$ we considered includes cosine dissimilarity, $l_2$ dissimilarity, $l2_{2D}$ dissimilarity, symmetric KL dissimilarity, and Hellinger dissimilarity.
Definitions and details of these dissimilarity measurements are in supplementary~section~\Rmnum{2} \citep{supp}.
By using the simulation setting in section~\ref{s:simu2}, we found cosine dissimilarity outperforms others [see details in supplementary figure~S3 \cite{supp}], 
and hence adopted it in our paper and would recommend it for other applications.

\section{Simulation results}
\label{sec:simu}

	\subsection{DE gene detection and FDR control}
        	 \label{s:simu1}

	To evaluate the performance of the proposed method and compare to other methods,
	we performed the simulations below:

	\begin{enumerate}
	\item Let $S$ be the number of studies, 
	$G=10,000$ be the total number of genes, 
	and $N=20$ be the number of cases and controls ($2N$ samples in total).
	\item We firstly focus on simulating gene correlation structure and assume no effect size for all genes in all studies.
	We sample expression levels with correlated genes following the procedure in \citet{song2014hypothesis}.
	\begin{enumerate}
	\item Sample 200 gene clusters with 20 genes in each cluster,
	and the remaining 6,000 genes are uncorrelated.  
	Denote by $C_g\in \{0, 1, \ldots, 200\}$ the cluster membership indicator for gene $g$, 
	for example, $C_g=1$ indicates gene $g$ is in cluster $1$, 
	whereas $C_g=0$ indicates gene $g$ is not in any gene cluster.
	\item For cluster $c$ and study $s$, 
	sample $A'_{cs}\sim\mbox{W}^{-1}(\Psi, 60)$, 
	where $1\le c\le 200$, $\Psi=0.5I_{20\times 20}+0.5J_{20\times 20}$, 
	$\mbox{W}^{-1}$ denotes the inverse Wishart distribution, 
	$I$ is the identity matrix, and $J$ is the matrix with all elements equal to $1$.  
	Then $A_{cs}$ is calculated by standardizing $A'_{cs}$ such that the diagonal elements are all $1$s.  
	The covariance matrix for gene cluster $c$ in study $s$ is calculated as $\Sigma_{cs}=\sigma^2A_{cs}$,
	where $\sigma$ is a tuning parameter we vary in the evaluation.
	\item 
	\label{item:Xprime}
	Denote by $g_{c1},\ldots,g_{c20}$ the indices of the 20 genes in cluster $c$ (i.e. $C_{g_{cj}}=c$, 
	where $1\le c\le C (C = 200)$,  and $1\le j\le20$).  
	Sample expression levels of genes in cluster $c$ for sample $n$ in study $s$ as $(X'_{g_{c1}sn},\ldots,X'_{g_{c20}sn})\sim \mbox{MVN}(0,\Sigma_{cs})$, 
	where $1\le n\le 2N$ and $1\le s\le S$.  
	For any uncorrelated gene $g$ with $C_g=0$, 
	sample the expression level for sample $n$ in study $s$ as $X'_{gsn}\sim \mbox{N}(0, \sigma^2)$, 
	where $1\le n\le 2N$ and $1\le s\le S$.
	\end{enumerate}
	\item Sample DE genes, effect sizes, and their differential expression directions.
	\begin{enumerate}
	\item Assume that the first $G_1$ genes are DE in at least one of the combined studies, 
	where $G_1=30\% \times G$.  
	For each $1\le g\le G_1$, 
	sample $v_g$ from discrete uniform distribution $v_g\sim \mbox{UNIF}(1,\ldots,S)$,
	and then randomly sample subset  $\mathbf{v}_g\subseteq\{1,\ldots,S\}$ such that $|\mathbf{v}_g|=v_g$.  
	Here $\mathbf{v}_g$ is the set of studies in which gene $g$ is DE. 
	\item For any DE gene $g$ ($1\le g\le G_1$), sample  gene-level effect size $\theta_g\sim \mbox{N}_{0.5+}(1,1)$, where $\mbox{N}_{a+}$ denotes the truncated Gaussian distribution within interval $(a, \infty)$.  For any $s\in \mathbf{v}_g$, also sample study-specific effect size $\theta_{gs}\sim \mbox{N}_{0+}(\theta_g, 0.2^2)$.
	\label{step:effectSize}
	\item 
	Sample $d_g\sim \mbox{Ber}(0.5)$, where $1\le g\le G_1$ and $s\in \mathbf{v}_g$.  Here $d_g$ controls effect size direction for gene $g$.
	\end{enumerate}
	\item Add the effect sizes to the gene expression levels sampled in step~\ref{item:Xprime}.  For controls ($1\le n \le N$), set the expression levels as $X_{gsn}=X'_{gsn}$. For cases ($N+1\le n\le 2N$), if $1\le g\le G_1$ and $s\in\mathbf{v}_g$, set the expression levels as $X_{gsn}=X'_{gsn}+(-1)^{d_g}\theta_{gs}$; otherwise, set $X_{gsn}=X'_{gsn}$.
	\end{enumerate}
	We performed simulation with $S=3,5,10$ and $\sigma=1, 2, 3$ to account for different numbers of combined studies and various signal/noise ratio.  
	We applied \textit{limma} to compare the gene expression levels between the control group and the case group.
	We transformed the two-sided \textit{p}-values from \textit{limma} to one-sided \textit{p}-values by taking account of the directions of estimated effect sizes. 
	Then one-sided \textit{p}-values are transformed to Z statistics.	
	BayesMP took 53 minutes on a regular PC with  1.4 GHz CPU 
	(i.e., for one simulation with $S=3$ and $\sigma=1$) to obtain
	10,000 posterior samples using MCMC.
	Supplementary figure~S1 \citep{supp} shows the posterior samples of $\pi_g$ in two example genes --- a DE gene and a non-DE gene as well as $\gamma$.
	Because the posterior samples converge to a stationary distribution very quickly for our method [see examples in supplementary figure S1 \citep{supp}], excluding 500 posterior samples for burn-in is enough for the analyses of this paper.
	We repeated the simulation $100$ times and averaged the results.	
		We compared the performance of our method and existing methods designed for decision space $\D_{\bar{A}}$ (maxP), 
		$\D_B$ (Fisher's method and AW), 
		and $\D_{\bar{r}}$ (rOP) with $r=\lfloor S/2\rfloor + 1$ using false discovery rate (FDR), false negative rate (FNR), and the area under the curve (AUC) of the receiver operating characteristic (ROC) curve.  
		Note that in our comparison, we used $\D_{\bar{A}}$ and $\D_{\bar{r}}$ which are equivalent to the complementary hypothesis testings $\HS_{\bar{A}}$ and $\HS_{\bar{r}}$,
		 and the true number of studies in which a gene is DE can be calculated because the truth is known in simulation.  
	Table~\ref{tab:fdr}  compares the FDR, FNR, and AUC of different methods
	at nominal FDR level 5\%, which is widely accepted in genomic research.
	For decision space $\D_B$, all the three methods controlled FDR around its nominal level, 
	which is anticipated because $\HS_B$ is complementary and equivalent to $\D_B$.  
	Fisher and AW were slightly overconservative in terms of FDR control --- 
	Fisher's method and AW controlled FDR at around 3.5\%, 
	whereas our BayesMP controlled FDR at around the nominal 5\%.  
	This phenomenon has also been observed in \citet{song2014hypothesis} when the genes are correlated.  
	Because BayesMP is less conservative than the other two methods, 
	we were able to detect slightly more genes under $\D_B$.	
	In addition, BayesMP achieved similar (or slightly better) FNR and AUC with Fisher and AW under  $\D_B$.
	Fisher' s method is known to be almost optimal --- that is, 
	Fisher' s method achieves asymptotic Bahadur optimality (ABO) \citep{littell1971asymptotic} when effect sizes are consistent and equal for all studies.
	This indicated BayesMP is also almost optimal for  $\D_B$ under the simulation scenario.
	For decision space $\D_{\bar{A}}$ and $\D_{\bar{r}}$, 
	we observed that maxP and rOP lost control of FDR.  
	As discussed in section \ref{s:intro}, 
	this is caused by the nature that $\HS_A$ and $\HS_r$ have noncomplementary null and alternative spaces. 
	 To the contrary, BayesMP still controlled FDR close to its nominal level for $\D_{\bar{A}}$ and $\D_{\bar{r}}$.  
	 Note that because maxP and rOP were not able to control FDR at its nominal level, 
	 the number of genes detected by these methods was not directly comparable to our methods.  
	 However, FNR of BayesMP was only slightly larger than rOP for $\D_{\bar{r}}$, 
	 regardless of the conservative FDR control.
	 When $S$ was large ($S=10$) in simulation,
	 the FDR control of BayesMP under $\D_{\bar{A}}$ deviated from its nominal level (around 10\% instead of 5\%).
	 The reason for the anticonservative control was that the data simulation setting was different from model generative process,
	 thus small errors  accumulated when $S$ got large.
	 However, BayesMP still performed much better than maxP and roP (FDR = 0.58 for maxP in $\D_{\bar{A}}$ 
	 and FDR = 0.23 for rOP in $\D_{\bar{r}}$ setting).
	In addition, BayesMP achieved much larger AUC than maxP and rOP under  $\D_{\bar{A}}$ and $\D_{\bar{r}}$,
	which indicated better predictive power of BayesMP.

\begin{table}
\centering
\caption{Comparison of different methods by FDR, FNR, and AUC of ROC curve for decision spaces 
$\D_{\bar{A}}$, $\D_B$, and $\D_{\bar{r}}$.  
The nominal FDR is 5\% for all compared methods.  
The mean results and SD (in parentheses) were calculated based on $100$ simulations.} 
\label{tab:fdr}
\scriptsize
  \begin{tabular}{c|cc|cc|ccc|cc}
   \hline
   \hline
 &  & & \multicolumn{2}{c|}{$\D_{\bar{A}}$} & \multicolumn{3}{c|}{$\D_B$} & \multicolumn{2}{c}{$\D_{\bar{r}}$ ($r=\lfloor S/2 \rfloor + 1$)}\\
 &   $S$ & $\sigma$ & BayesMP & maxP & BayesMP & Fisher & AW & BayesMP & rOP\\
   \hline
   \multirow{18}{*}{FDR} 
&   \multirow{6}{*}{$3$}  & \multirow{2}{*}{$1$}  & 0.058 & 0.207 & 0.042 & 0.035 & 0.035 & 0.034 & 0.086\\
&&& (0.008) & (0.014) & (0.004) & (0.005) & (0.004) & (0.004) & (0.007)\\
			&				   &\multirow{2}{*}{$2$}  & 0.058 & 0.198 & 0.047 & 0.035 & 0.036 & 0.037 & 0.080\\
&&& (0.010) & (0.017) & (0.006) & (0.006) & (0.006) & (0.005) & (0.009)\\
			    
&				   &\multirow{2}{*}{$3$}  & 0.043 & 0.184 & 0.050 & 0.035 & 0.036 & 0.036 & 0.073\\
&&& (0.016) & (0.025) & (0.009) & (0.008) & (0.009) & (0.009) & (0.014)\\

\cline{2-10}			    
 &   \multirow{6}{*}{$5$}  & \multirow{2}{*}{$1$}  & 0.075 & 0.361 & 0.043 & 0.034 & 0.034 & 0.037 & 0.130\\
&&& (0.011) & (0.017) & (0.004) & (0.004) & (0.004) & (0.005) & (0.008)\\
			&				   &\multirow{2}{*}{$2$}  & 0.079 & 0.349 & 0.046 & 0.034 & 0.034 & 0.042 & 0.115\\
&&& (0.019) & (0.022) & (0.006) & (0.005) & (0.005) & (0.006) & (0.010)\\
			    
&				   &\multirow{2}{*}{$3$}  & 0.062 & 0.330 & 0.050 & 0.034 & 0.034 & 0.042 & 0.099\\
&&& (0.033) & (0.028) & (0.007) & (0.006) & (0.007) & (0.008) & (0.013)\\
			    
\cline{2-10}			    
 &   \multirow{6}{*}{$10$}  & \multirow{2}{*}{$1$}  & 0.105 & 0.580 & 0.050 & 0.035 & 0.035 & 0.047 & 0.231\\
&&& (0.017) & (0.021) & (0.003) & (0.004) & (0.004) & (0.005) & (0.010)\\
			&				   &\multirow{2}{*}{$2$}  & 0.122 & 0.569 & 0.050 & 0.035 & 0.035 & 0.059 & 0.200\\
&&& (0.029) & (0.027) & (0.005) & (0.005) & (0.005) & (0.007) & (0.012)\\
			    
&				   &\multirow{2}{*}{$3$}  & 0.108 & 0.554 & 0.054 & 0.035 & 0.035 & 0.062 & 0.167\\
&&& (0.064) & (0.044) & (0.007) & (0.006) & (0.006) & (0.010) & (0.015)\\
			    
   \hline
   \hline
   \multirow{18}{*}{FNR} 
&   \multirow{6}{*}{$3$}  & \multirow{2}{*}{$1$}  & 0.024 & 0.017 & 0.054 & 0.058 & 0.056 & 0.039 & 0.032\\
&&& (0.002) & (0.001) & (0.003) & (0.003) & (0.003) & (0.002) & (0.002)\\
			&				   &\multirow{2}{*}{$2$}  & 0.064 & 0.055 & 0.170 & 0.181 & 0.183 & 0.114 & 0.112\\
&&& (0.002) & (0.002) & (0.003) & (0.003) & (0.003) & (0.003) & (0.003)\\
			    
&				   &\multirow{2}{*}{$3$}  & 0.089 & 0.082 & 0.240 & 0.253 & 0.257 & 0.161 & 0.166\\
&&& (0.003) & (0.002) & (0.002) & (0.002) & (0.002) & (0.003) & (0.003)\\

\cline{2-10}			    
 &   \multirow{6}{*}{$5$}  & \multirow{2}{*}{$1$}  & 0.016 & 0.009 & 0.043 & 0.048 & 0.045 & 0.032 & 0.021\\
&&& (0.001) & (0.001) & (0.003) & (0.002) & (0.003) & (0.002) & (0.001)\\
			&				   &\multirow{2}{*}{$2$}  & 0.040 & 0.031 & 0.150 & 0.162 & 0.164 & 0.092 & 0.086\\
&&& (0.002) & (0.002) & (0.003) & (0.003) & (0.003) & (0.002) & (0.002)\\
			    
&				   &\multirow{2}{*}{$3$}  & 0.054 & 0.047 & 0.219 & 0.236 & 0.241 & 0.132 & 0.136\\
&&& (0.002) & (0.002) & (0.003) & (0.003) & (0.003) & (0.002) & (0.003)\\
			    
\cline{2-10}			    
 &   \multirow{6}{*}{$10$}  & \multirow{2}{*}{$1$}  & 0.009 & 0.004 & 0.030 & 0.035 & 0.031 & 0.027 & 0.010\\
&&& (0.001) & (0.001) & (0.002) & (0.002) & (0.002) & (0.001) & (0.001)\\
			&				   &\multirow{2}{*}{$2$}  & 0.022 & 0.015 & 0.119 & 0.132 & 0.132 & 0.070 & 0.054\\
&&& (0.001) & (0.001) & (0.003) & (0.003) & (0.003) & (0.002) & (0.002)\\
			    
&				   &\multirow{2}{*}{$3$}  & 0.028 & 0.023 & 0.184 & 0.206 & 0.211 & 0.097 & 0.095\\
&&& (0.002) & (0.002) & (0.003) & (0.002) & (0.003) & (0.003) & (0.003)\\
			    
   \hline
   \hline
   \multirow{18}{*}{AUC} 
&   \multirow{6}{*}{$3$}  & \multirow{2}{*}{$1$}  & 0.976 & 0.926 & 0.973 & 0.973 & 0.973 & 0.980 & 0.972\\
&&& (0.003) & (0.003) & (0.002) & (0.002) & (0.002) & (0.002) & (0.003)\\
			&				   &\multirow{2}{*}{$2$}  & 0.906 & 0.876 & 0.880 & 0.878 & 0.876 & 0.902 & 0.873\\
&&& (0.006) & (0.006) & (0.005) & (0.005) & (0.005) & (0.005) & (0.006)\\
			    
&				   &\multirow{2}{*}{$3$}  & 0.833 & 0.806 & 0.788 & 0.784 & 0.780 & 0.820 & 0.776\\
&&& (0.008) & (0.008) & (0.006) & (0.006) & (0.006) & (0.007) & (0.008)\\

\cline{2-10}			    
 &   \multirow{6}{*}{$5$}  & \multirow{2}{*}{$1$}  & 0.974 & 0.920 & 0.978 & 0.978 & 0.979 & 0.985 & 0.979\\
&&& (0.004) & (0.003) & (0.002) & (0.002) & (0.002) & (0.002) & (0.002)\\
			&				   &\multirow{2}{*}{$2$}  & 0.918 & 0.891 & 0.896 & 0.893 & 0.892 & 0.928 & 0.893\\
&&& (0.007) & (0.006) & (0.004) & (0.004) & (0.004) & (0.004) & (0.004)\\
			    
&				   &\multirow{2}{*}{$3$}  & 0.866 & 0.833 & 0.812 & 0.805 & 0.800 & 0.859 & 0.801\\
&&& (0.009) & (0.009) & (0.005) & (0.005) & (0.006) & (0.005) & (0.006)\\
			    
\cline{2-10}			    
 &   \multirow{6}{*}{$10$}  & \multirow{2}{*}{$1$}  & 0.964 & 0.910 & 0.985 & 0.983 & 0.985 & 0.985 & 0.986\\
&&& (0.007) & (0.003) & (0.002) & (0.002) & (0.002) & (0.002) & (0.002)\\
			&				   &\multirow{2}{*}{$2$}  & 0.907 & 0.905 & 0.920 & 0.917 & 0.917 & 0.948 & 0.920\\
&&& (0.010) & (0.006) & (0.004) & (0.004) & (0.004) & (0.004) & (0.004)\\
			    
&				   &\multirow{2}{*}{$3$}  & 0.883 & 0.865 & 0.849 & 0.840 & 0.835 & 0.907 & 0.838\\
&&& (0.011) & (0.009) & (0.005) & (0.005) & (0.005) & (0.005) & (0.006)\\
			    
   \hline   
  \end{tabular}
\end{table}

	\subsection{Simulation to evaluate  meta-pattern gene module detection}
        	 \label{s:simu2}

	To evaluate the performance of gene module detection,
	we adopted a simulation procedure similar to section \ref{s:simu1}.  
	We simulated $S=4$ studies in total.
	Among the $G = 10,000$ genes, we set 
	$4\%$ of them as homogeneously concordant DE genes, 
	 with the same direction in all studies (all positive or all negative).
	We denote ``homo$+$'' as the homogeneously concordant DE genes with all positive effect sizes and 
	``homo$-$'' as the homogeneously concordant DE genes with all negative effect sizes.
	We also set another $4\%$ of all genes as study-specific DE genes --- differentially expressed only in one study.
	Among them, 
	$1/4$ are DE genes only in the first study with positive effect sizes (denoted as ``ssp$1+$''),
	$1/4$ are DE genes only in the first study with negative effect sizes (denoted as ``ssp$1-$''),
	$1/4$ are DE genes only in the second study with positive effect sizes (denoted as ``ssp$2+$''),
	and the remaining $1/4$ are DE genes only in the second study with negative effect sizes (denoted as ``ssp$1-$'').
	The rest of the genes are not DE (denoted as ``non-DE'').
	The biological variance $\sigma$ is set to $1$ in this simulation.
	
        We first applied the proposed method to this synthetic dataset.
        We controlled FDR at $5\%$ under $\D_B$ and obtained 691 genes.
        These genes were used as input for our gene module detection using the tight clustering algorithm.
	We identified six gene modules in these 691 genes.  
	The detected gene modules are tabulated against the true gene modules simulated in table~\ref{tab:simu2_tight}
	(module 0 contains scattered genes not assigned to any of the six modules).
	The detected gene modules clearly correspond to the true modules, 
	 and most of the non-DE genes were left to the noises.  
	 The heatmaps, confidence scores and DE patterns of these six modules are shown in supplementary figure~S2 \citep{supp}.
An alternative approach is to apply tight clustering directly on the \textit{Z}-statistics.  
By comparing the results, we found that the modules detected by this naive approach are neither pure nor distinguishable under our simulation settings [see details in supplementary table~S1 \citep{supp}].
				
\begin{table}
	\centering
\caption{Contingency table of simulation underlying truth and tight clustering result with 6 target modules. 
0 represents the scattered gene group.
1 $\sim$ 6 represent 6 detected modules.
Bolded numbers are genes with correct assignment.
}
\label{tab:simu2_tight}
\begin{tabular}{c|ccccccc}
\hline
\hline
Module & homo$-$ & homo$+$ & ssp$1-$ & ssp$1+$ & ssp$2-$ & ssp$2+$ & non-DE\\
\hline
1 & \textbf{177} & 0 & 0 & 0 & 0 & 0 & 0\\
2 & 0 & \textbf{164} & 0 & 0 & 0 & 0 & 0\\
3 & 0 & 2 & 0 & 0 & 0 & \textbf{84} & 4\\
4 & 0 & 4 & 0 & \textbf{72} & 0 & 0 & 6\\
5 & 0 & 0 & \textbf{66} & 0 & 0 & 0 & 2\\
6 & 1 & 0 & 0 & 0 & \textbf{62} & 0 & 0\\
0 & 6 & 4 & 0 & 0 & 7 & 0 & \textbf{19}\\

\hline
\end{tabular}

\end{table}%

	\subsection{Additional simulations on sample size effects}
	\label{sec:simu3}
	To assess impact of unbalanced sample size, we simulated the following special scenarios with 
	\begin{enumerate}
		\item different numbers of samples in different studies,
		\label{step:2}
		\item different numbers of cases and controls in each study,
		\label{step:3}
		\item different ratios of case and control samples in each study.
		\label{step:4}
	\end{enumerate}
	Below, we followed the simulation setting in section~\ref{s:simu1} unless otherwise mentioned.
	In scenario~\ref{step:2}, 
	we allowed different studies to have different numbers of samples.
	Under this scenario, we simulated 
	case (a), with the numbers of samples (case/control) being 20/20, 30/30, 40/40 for three studies respectively,
	and case (b), with the numbers of samples (case/control) being 20/20, 50/50, 100/100  respectively.
	In scenario~\ref{step:3}, 
	we allowed the numbers of cases and controls to be different within each study.
	Under this scenario, we simulated 
	case (c), with the numbers of samples (case/control) being 60/20, 60/20, 60/20 for three studies respectively.
	In scenario~\ref{step:4}, 
	we allowed the ratios of case and control samples to be different within each study.
	Under this scenario, we simulated 
	case (d), with the numbers of samples (case/control) being 20/60, 40/40, 60/20 for three studies respectively.
	
	The results of simulation cases (a)-(d) are shown in supplementary table~S2 \citep{supp}.
	We observed that under scenario~\ref{step:2}, scenario~\ref{step:3}, and scenario~\ref{step:4}, 	
	BayesMP controlled FDR to its nominal level for $\D_{\bar{A}}$, $\D_B$, and $\D_{\bar{r}}$.
	These results indicate that our Bayesian model is robust against impact of heterogeneity sample size in a wide spectrum of scenarios.

	\subsection{Additional simulations on robustness of the algorithm}
	\label{sec:robustness}
	In our Bayesian hierarchical model, 
	we assume that the null component $f_0$ comes from $\mbox{N}(0,1)$ --- the theoretical null for all studies.
	However, this assumption can be violated when genes from null components are correlated \citep{efron2001empirical}.
	Therefore, we designed simulations to access the performance of our model when theoretical null assumption is not valid. 
	To be specific,  in our simulation setting step~\ref{item:Xprime}, 
	we varied the number of correlated clusters $C = 200, 300, 400, 500$,
	representing increasing probability of correlated null component. 
	We evaluated the performance of the Bayesian hierarchical model, 
	and the result is shown in supplementary table~S4 \citep{supp}.
	We observe that the BayesMP still performs very well even though the null components are correlated.
	Therefore BayesMP is robust against the theoretical null assumption for the null component.

\section{Real data applications}
\label{sec:realData}
To further evaluate our method and demonstrate its usage,
we applied BayesMP on three real meta-analysis examples:
one on the gene expression of  multi-tissue microarray studies using metabolism-related knockout mice,
one on multi-brain-region RNA-seq studies using HIV transgenic rats,
and another on transcriptomic breast cancer studies across multiple platforms.
The sample size description is shown in supplementary table~S5 \citep{supp}.

	\subsection{Mouse metabolism data}
        \label{s:mouseReal}
Very long-chain acyl-CoA dehydrogenase (VLCAD) deficiency was found to be associated with energy metabolism disorder in children \citep{li2011adaptively}. 
Two genotypes of the mouse model --- wild type (VLCAD $+$/$+$) and VLCAD-deficient (VLCAD $-$/$-$) --- 
were studied for three types of tissues (brown fat, liver, and heart) with three to four mice in each genotype group.  
The total number of genes from these three transcriptomic microarray studies is 14,495.
Supplementary table~S5(a) \citep{supp} shows details of the study design.
Two-sided \textit{p}-values were calculated using \textit{limma} by comparing wild-type versus VLCAD-deficient mice in each tissue,
and one-sided \textit{p}-values were obtained by considering the effect size direction.
BayesMP took 62 minutes to obtain 10,000 posterior samples using MCMC, 
and the first 500 posterior samples were excluded as burn-in iterations.
By controlling FDR at $5\%$,
we detected 168 probes under $\D_{\bar{A}}$; 
among them, 156 have concordant effect size directions in all the three tissues.
The heatmap for the genes detected under $\D_{\bar{A}}$ is shown in supplementary figure~S6 \citep{supp}.

        Similarly, under $\D_B$ we {\color{black} obtained 3,496 DE genes at an FDR level of 5\% and 1,243 DE genes at an FDR level of 1\%.  
        Due to the unusually strong genome-wide biological signal,  
        we decided to use the FDR cutoff of 1\% for the downstream analysis to increase statistical power (of the Fisher's exact test) in pathway enrichment analysis. }
        Then we applied the tight clustering algorithm using the cosine distance as described in section \ref{sec:tightclust} to detect modules based on 1,243 DE genes at the stringent FDR level of 1\%.
	The results are shown in figure~\ref{fig:mouse1_modules}.
	Using the tight clustering, we were able to detect 6 gene modules with unique patterns.
	The first two biomarker modules are consensus genes that are up-regulated or down-regulated in all tissues.
	The next four modules are biomarkers with study-specific differential patterns.
	For example,
	DE genes in module \Rmnum{3} are up-regulated in the heart but not in the brown fat or the liver.
	To examine the biological functions of these modules, 
	we performed pathway enrichment analysis for genes in each module using Fisher's exact test.
	The pathway database was downloaded from the Molecular Signatures Database (MSigDB) v5.0 (\url{http://bioinf.wehi.edu.au/software/MSigDB/}),
	where a mouse-version pathway database was created by combining pathways from KEGG, BIOCARTA, REACTOME, and GO databases and mapping all the human genes to their orthologs in mouse
	using Jackson Laboratory Human and Mouse Orthology Report 
	(\url{http://www.informatics.jax.org/orthology.shtml}).
	{\color{black}
	The resulting \textit{p}-values were converted to $q$-values by Benjamini-Hochberg correction \citep{benjamini1995controlling} to adjust for multiple comparison,
	where $q$-value measures the false discovery rate (FDR) one would incur by accepting the given test.
	}
	At an FDR cutoff of $5\%$, 
	we summarized the pathway detection result 
	(see supplementary Excel file 1 for detailed pathway information).
	Among the six gene modules with distinct DE patterns, 
	module \Rmnum{1} is enriched in enzyme pathways (e.g., KEGG lysosome; $q=1.6\times 10^{-3}$),
	module \Rmnum{2} is enriched in pathways for lyase activity
	(e.g., GO lyase activity; $q=0.26$),
	module \Rmnum{3} is enriched in defense response pathways
	(e.g., GO defense response pathway; $q=2.2\times 10^{-6}$),
	module \Rmnum{4} is enriched in phosphatase regulator  pathways
	(e,g, GO phosphatase regulator activity;  $q=0.12$),
	module \Rmnum{6} is enriched in platelet related pathways
	(e.g., GO formation of platelet; $q=2.2\times 10^{-2}$).
	For module \Rmnum{5}, we didn't detect any enriched pathways.
	Remarkably, all of these pathways are known to be related to different aspects of metabolism, 
	which indicates that our method is able to detect homogeneous and heterogeneous gene modules that are biologically meaningful.
	The biomarker clustering result enhances meta-analysis interpretation and 
	motivates hypothesis for further biological investigation.
	For example, 
	it is intriguing to understand why VLCAD-mutation impacts DE genes only up-regulated in the heart but not in the brown fat or the liver, 
	and why these genes are associated with the defense response pathway.

\begin{figure}[htbp]
	\centering
	\subfigure[Heatmap]{
		\label{fig:mouse1_modulesA}
		\includegraphics[height=1.1\columnwidth]{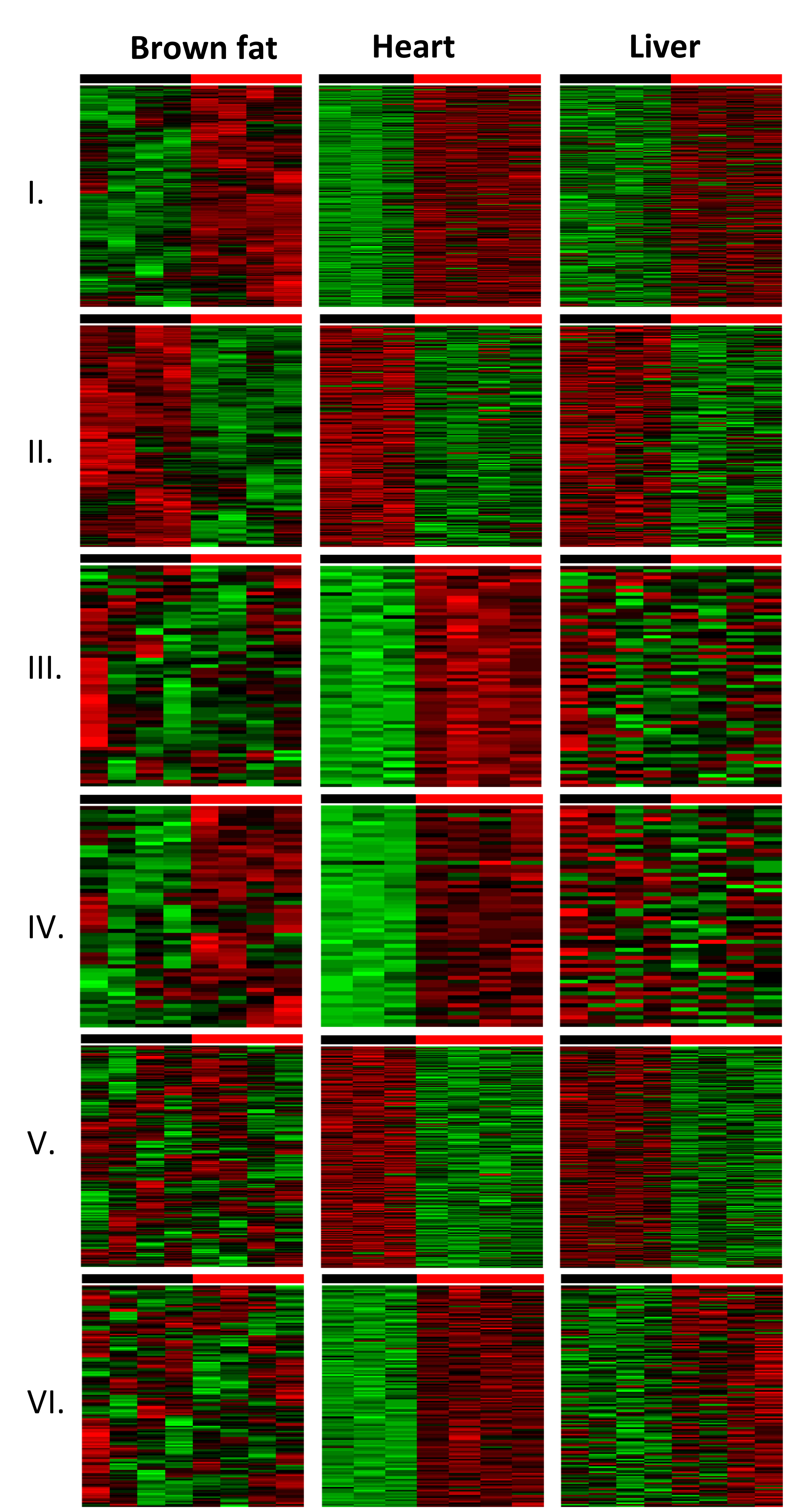}
	}
	\subfigure[CS]{
		\label{fig:mouse1_modulesB}
		\includegraphics[height=1.1\columnwidth]{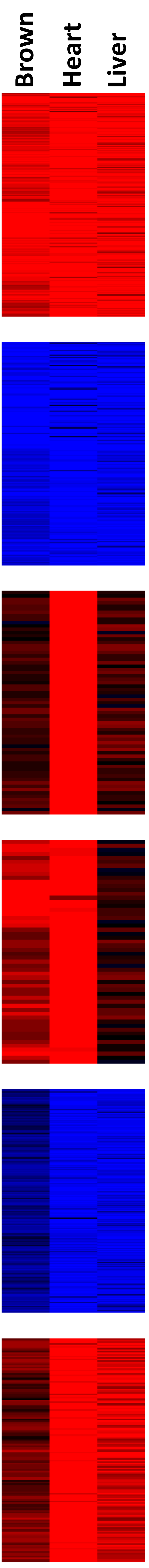}
	}
	\subfigure[bar plot]{
		\label{fig:mouse1_modulesC}
		\includegraphics[height=1.1\columnwidth]{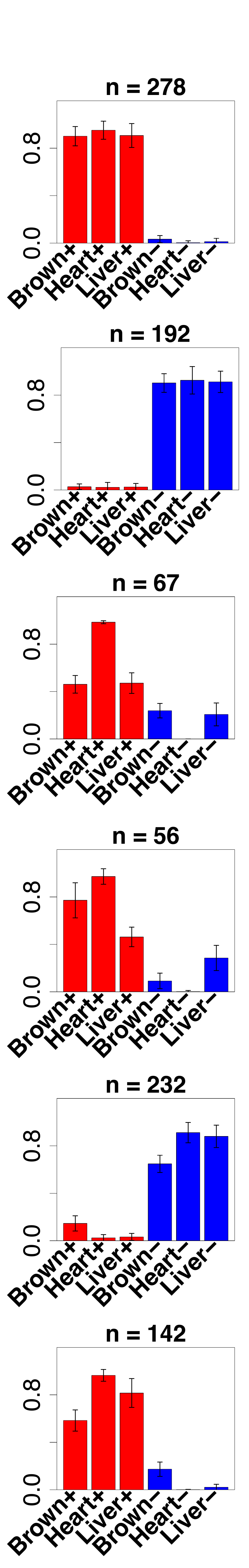}
	}
		\caption{
	Six meta-pattern modules of biomarkers from the mouse metabolism example.
	Each row shows a set of detected biomarkers showing similar meta-pattern of differential signals.
	\ref{fig:mouse1_modulesA} Heatmaps of detected genes (on the rows) and samples (on the columns) for each tissue (brown fat, heart, or liver), 
	where each tissue represents a study (i.e., brown fat for $s=1$, heart for $s=2$, liver for $s=3$).	
	The black color bar on top represents wild type (control), 
	and the red color bar on top represents VLCAD-deficient mice (case).
	\ref{fig:mouse1_modulesB} Heatmaps of confidence scores (CS) (genes on the rows and three studies on the columns). 
	Confidence score is described in section~\ref{sec:decision}, which ranges from $-1$ (blue color for down-regulation) to 1 (red color for up-regulation).
	\ref{fig:mouse1_modulesC} Bar plots of mean posterior probability for $Y_{gs} =1$ (red color for up-regulation) and $Y_{gs} = -1$ (blue color for down-regulation) for each module in each tissue.
	Error bar represents standard deviation across all genes in the module.
	The number of genes is shown on top of each bar plot.
	In figure~\ref{fig:mouse1_modulesB} and  \ref{fig:mouse1_modulesC}, we use ``Brown" to denote ``Brown fat".
	}
  	\label{fig:mouse1_modules}
\end{figure}

	\subsection{HIV transgenic rat RNA-seq data}
        \label{s:mouseHIV}

\citet{li2013transcriptome} conducted studies to determine gene expression differences 
between F344 and HIV transgenic rats using RNA-seq (GSE47474 in the Gene Expression Omnibus database 
[\url{http://www.ncbi.nlm.nih.gov/geo/query/acc.cgi?acc=GSE47474}]).  
The HIV transgenic rat model was designed to study learning, memory, vulnerability to drug addiction, 
and other psychiatric disorders vulnerable to HIV-positive patients.
They sequenced RNA transcripts with 12 F334 rats and 12 HIV transgenic rats 
in prefrontal cortex (PFC), hippocampus (HIP), and striatum (STR) brain regions
[see detail in supplementary table~S5(b) \citep{supp}].
We applied the same alignment procedure using TopHat \citep{trapnell2009tophat} adopted by \citet{li2013transcriptome} and
obtained the RNA-seq count data for 16,821 genes by BEDTools \citep{quinlan2010bedtools}.
We filtered out genes with less than 100 total counts within any brain region and ended up with 11,824 genes.
We removed potential outliers by checking the sample correlation heatmaps (supplementary figure~S7) \citep{supp}.
We employed R package \textit{edgeR}  to perform DE gene detection and obtained two-sided \textit{p}-values.
The one-sided \textit{p}-values were obtained by considering the effect size directions and further converted to Z statistics.
It took 41 minutes to obtain 10,000 posterior samples via MCMC, 
and the  first 500 posterior samples were excluded as burn-in iterations.
Since it is well known that the postmortem brain expression profiles generally contain weak signals,
we controlled  FDR at $20\%$ in the analysis.  
Under $\D_{\bar{A}}$, we detected 69 genes, of which all 69 had concordant DE directions.
The heatmaps of the expression levels (log of normalized counts) of these genes in the three brain regions are shown in supplementary figure~S8 \citep{supp}.  
Under $\D_B$, we detected 669 genes.
We further applied the tight clustering algorithm, 
and obtained 3 gene modules.
Their gene expression heatmaps, 
DE confidence scores and bar plots of posterior probability of differential expression are shown in
figure~\ref{fig:mouse2_modules}.
To examine the biological functions of these modules, 
we also performed pathway enrichment analysis using the same procedure as in section~\ref{s:mouseReal}
(see supplementary Excel file 2 for detailed information).  
{\color{black}
As the postmortem brain expression profiles generally contain much weaker signals and the gene size of each module is relatively small,
we presented \textit{p}-values (unadjusted for multiple comparison) instead of $q$-values for the below pathway enrichment analysis.
}
According to the results, module \Rmnum{1} is down-regulated in all three brain regions, 
and is enriched in pathways related to the immune system
(e.g., REACTOME inntate immunity signaling; $p = 6.26 \times 10^{-3}$),
module \Rmnum{2} is up-regulated in all three brain regions 
and is enriched in pathways related to  response to virus
(e.g., GO response to virus; $p = 1.81 \times 10^{-3}$),
module \Rmnum{3} is down-regulated in HIP, 
but up-regulated in PFC and STR, and it is enriched in pathways related to synapsis 
(e.g., GO synaptic transmission; $p=2.75 \times 10^{-3}$) and 
neuron connections
(e.g., KEGG neuroactive ligand receptor interaction; $p=2.88 \times 10^{-3}$).
Since it is well-known that  HIV attacks the immune system \citep{weiss1993does}, 
we anticipate genes for immune response to be down-regulated, as observed in module \Rmnum{1}.
The up-regulation of response to virus pathway  we found in module \Rmnum{2} is reasonable since the mice are infected by the virus.
Moreover, because different brain regions have different functions, 
it is not surprising to discover some neuron-related genes that may respond differently to HIV in different brain regions
(module \Rmnum{3}).

	\subsection{Breast cancer dataset}
        \label{s:breastCancer}
In this example, we combined seven breast cancer transcriptomic datasets,
which study the same biological problem using different gene expression platforms, 
including Illuminia, Affymetrix and RNA-seq.
The phenotype of interest is the breast cancer grade,
which is defined according to the cancer cells' growth patterns as well as their appearance compared to to healthy breast cells. 
Grade \Rmnum{1} cancer cells show slow and well-organized growth patterns and they look a little bit different from healthy cells, 
while grade \Rmnum{3} cancer cells grow quickly in disorganized patterns, with many dividing to make new cancer cells,
which look very different from healthy cells.
Details of these 7 datasets are described in supplementary table~S5(c) \citep{supp}.
For each study, if multiple probes match to the same gene, we select the probe with the largest IQR \citep{gentleman2006bioinformatics} to represent the gene.
After matching the same gene symbols, 3,920 genes that appeared in all 7 studies were selected for the analysis.
For each study, we obtained two-sided \textit{p}-values using \textit{limma} (for continuous data) or \textit{edgeR} (for count data) by comparing grade \Rmnum{1} versus grade \Rmnum{3}
{\color{black} with adjustment of race, age, and gender as covariates whenever they were available}.  
We calculated one-sided \textit{p}-values by considering the direction of the effect sizes.
We applied BayesMP, and it took 31 minutes to obtain 10,000 posterior samples using MCMC, 
and the first 500 posterior samples were excluded as burn-in iterations.

Since we expect the studies combined in this application to be more homogeneous than previous examples, 
genes DE in most of the studies are our major interest and worth further investigation.  Moreover, because the studies are conducted by different groups using different platforms, we want our analysis to be robust against a couple of studies with poor quality or unspecific probe design.
We visualize the number of declared DE genes at FDR 5\% for $\D_{\bar{r}}$ ($r=1, \ldots, 7$) in figure~\ref{fig:brcaGrade}.
\begin{figure}[htbp]
	\centering
	\includegraphics[width=0.7\columnwidth]{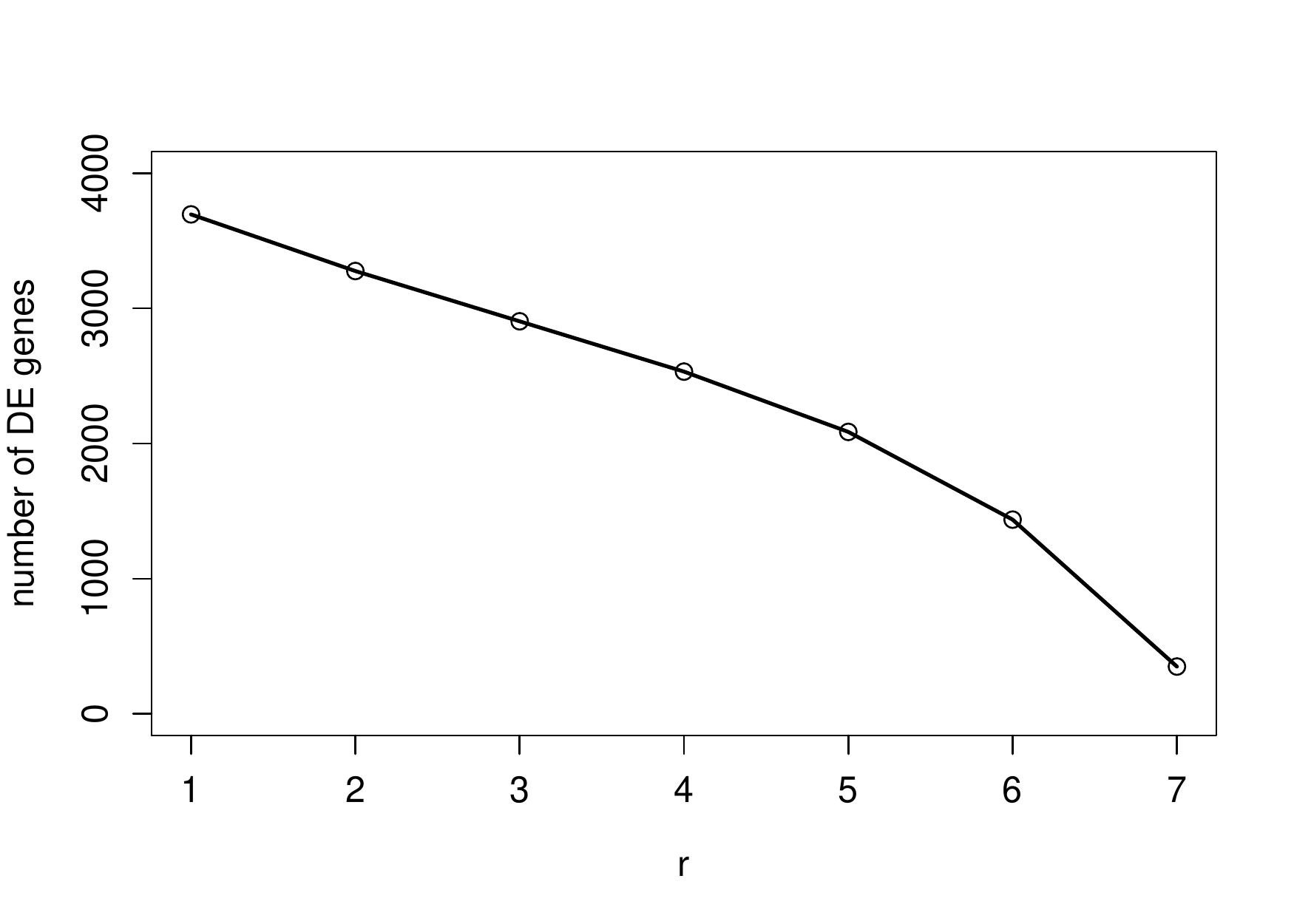}
	\caption{
	Number of declared DE genes at FDR 5\% for $\D_{\bar{r}}$ ($r=1, \ldots, 7$).
	}
	\label{fig:brcaGrade}
\end{figure}
It is noticed that there is a noticeable drop in the number of DE genes between $r=6$ and $r=7$.  
This indicates that it could be too stringent to require that DE genes agree in all seven studies. 
To make the DE gene detection replicable enough yet not too stringent, we chose to use $\D_{\bar{r}}$ ($r=6$). 
Under FDR 5\%,
we detected 1,437 significant genes.
Pathway enrichment analysis was performed using Fisher's exact test.
Top pathways associated with these genes are
REACTOME cell cycle ($q = 0.0066$) and
REACTOME DNA replication ($q = 0.0066$).
These results are biologically meaningful since it is known that cancer cells' growth patterns of different grades 
are associated with cell cycle and DNA replication.

{\color{black}
We further performed a cross-study validation to assess the performance of BayesMP.
In each iteration, we set aside one study, applied BayesMP to the remaining six studies,
and declared DE genes under $\D_{\bar{A}}$  using FDR=5\%.
To evaluate the consistency of DE gene detection results between the BayesMP of the six studies and the left-alone study,
we calculated the area under the curve (AUC) of the receiver operating characteristic (ROC) curves
by treating the DE status from BayesMP as a binary outcome and using the \textit{p}-values from the left-alone study to determine the moving sensitivity and specificity.
We repeated this procedure for each left-alone study to calculate the AUC values.
As a baseline contrast, for each left-alone single study, we also similarly calculated AUC values using the DE status (using FDR = 5\%) from each of the other six individual studies,
and calculated the average and standard error of AUCs.
As shown in supplementary table 6 \citep{supp}, 
the AUCs from BayesMP were between 0.65 and 0.85, consistently higher than the average AUCs from six individual studies (ranging from 0.59 to 0.75),
which shows quantitative validity of the meta-analysis. 
Study GSE6532 has a relatively lower cross-study validation AUC from BayesMP ($ =0.65$) than other studies, 
indicating lower compatibility with other studies.
This also justifies the usage of $r=6$ for $\D_{\bar{r}}$ in the previous paragraph. 

}

\section{Conclusion}
\label{sec:conclusion}
For meta-analysis at the genome-wide level,
the issues to efficiently integrate information across studies and genes and 
to quantify homogeneous and heterogeneous DE signals across studies have brought new statistical challenges.
The Bayesian hierarchical model provides a feasible and effective solution.
Compared to traditional hypothesis testing,
decision theory framework from Bayesian modeling provides a more flexible inference to determine DE genes from meta-analysis.
In this paper,
we proposed a Bayesian hierarchical model for general transcriptomic meta-analysis.
From posterior distribution of the latent variable (DE indicators $Y_{gs}$),
 FDR is well controlled, 
 and there is no need to select different test statistics for different {\color{black} hypothesis settings
($\HS_{\bar{A}}$, $\HS_B$, and $\HS_{\bar{r}}$).}
Post hoc clustering analysis on the detected biomarkers generates biomarker modules of different meta-patterns 
that facilitate biological interpretation and provides clues for hypothesis generation and biological investigation.

Our proposed BayesMP framework has the following advantages.
Firstly, the model is simple, yet practical and powerful.
The model is based on one-sided \textit{p}-values.
This allows easy integration of data from different platforms 
(for example, many different platforms from microarray and RNA-seq).
As a contrast, 
\citet{scharpf2009bayesian} described a full Bayesian hierarchical model for microarray meta-analysis, 
where the input data are microarray raw data (normalized intensities).
Although such a full Bayesian model theoretically best integrates all information and can be more powerful,
it cannot combine new RNA-seq platforms, since RNA-seq generates count data versus continuous intensity measures in microarray.
Such a full hierarchical model also runs a greater risk of model mis-specification that increases systemic bias across different microarray platforms.
Our framework, based on  \textit{p}-values, circumvents these difficulties and is powerful as long as the method used to generate \textit{p}-values in each study is effective.
Secondly, we adopted a conjugate Bayesian approach using DPs for alternative distributions $f^{(s)}_{\pm 1}$,
which enables a mixture of multiple subgroups instead of a single one-component alternative.
DPs is nonparametric and thus robust against model assumptions. 
The conjugacy of our model guarantees the fast computing of the Gibbs sampling procedure.
Thirdly, we have shown that decision theory framework from BayesMP provides good  FDR control and power 
under different hypothesis settings (or decision spaces).
Fourthly, in contrast to the ``hard'' decision of 0 or 1 weights in AW-Fisher,
the posterior distributions of the DE indicators $Y_{gs}$ provide a stochastic quantification and ``soft'' decision.
For example, in the mouse metabolism example,
gene Mbnl2 (probeset $1422836\_at$) and gene Bcl2l11 (probeset $1435449\_at$) have very similar \textit{p}-values in the three studies:
$(0.0063, 0.16, 0.097)$ for Mbnl2 and $(0.0070, 0.16, 0.098)$ for Bcl2l11. 
Using AW,
Mbnl2 ended up with a \textit{p}-value of $0.020$ with weights $(1, 0, 1)$, 
but Bcl2l11 had a \textit{p}-value of $0.021$ with weights $(1, 1, 1)$. 
A slight alteration of the \textit{p}-value in the second study (heart tissue) resulted in different weights.
The posterior probabilities for these two genes are,
however, very similar,
with $(0.827, 0.561, 0.671)$ and $(0.819, 0.561, 0.662)$ respectively, 
and they belonged to the same gene module \Rmnum{1} in figure~\ref{fig:mouse1_modules}.
The stochastic quantification avoids sensitive 0 or 1 weight changes in AW-Fisher.
Finally,
Fisher's method does not categorize DE genes with homogeneous or heterogeneous DE patterns across studies.
In contrast, the improved AW-Fisher method allows categorization of biomarkers but generates up to  $2^S-1$ biomarker clusters, 
which becomes intractable when $S$ is large.
The posterior probability of $Y_{gs}$ from BayesMP allows the application of tight clustering to directly identify tight clusters of biomarkers with distinct DE meta-patterns.
Our simulation and three applications have shown 
good clustering accuracy and improved interpretation of the biomarker modules.

BayesMP potentially has the following potential limitations.
Computing is often a consideration for Bayesian approaches.
Our experiences have shown that 10,000 simulations are sufficient to generate the posterior probabilities in general, 
and less than 1 hour is enough for combining around $10,000$ genes using a regular desktop.
For applications to much larger numbers of  features (e.g., SNPs or methylation sites in methyl-seq),
parallel computing and/or faster MCMC techniques will be needed.

\cite{efron2004large} recommended to estimate an empirical null distribution for the \textit{Z}-statistics 
when the null distribution deviates from $\mbox{N}(0,1)$.
In our simulation, we have shown that BayesMP with theoretical null generates robust results when genes from null are correlated, 
a scenario violating the theoretical null assumption.
We also found that BayesMP with empirical null is slightly overconservative (actually FDR 1\% while nominal FDR 5\%) when noise level is large.
In our R package, we allow the users to choose from using the theoretical null or the  empirical null.
However, the user should check the assumptions made when estimating the empirical null distributions.  
For example, \cite{efron2004large} assumes less than 10\% DE genes and that empirical null is also from a Gaussian distribution.
This assumption needs to be examined post hoc (e.g., examine whether the DE proportion is less than 10\% in each study given a relatively loose FDR control, say FDR $<$ 10\%).  
To explore whether theoretical or empirical null is more appropriate,
we also provide histograms for visual diagnosis [see supplementary figures S4 and S5 \citep{supp} for details].

As mentioned in the introduction,
batch effect correction and direct merging could be a viable alternative for meta-analysis if raw data are available and batch effects can be accurately identified and corrected. 
Two major types of batch effect correction methods have been widely studied in DE analysis.
The first type considers known batch information (aka unwanted variation [UV] factors; 
e.g., experiments performed on different dates, by different technicians, or on different platforms).
Many methods have been developed to correct for these known UVs \citep[e.g.][]{johnson2007adjusting, walker2008empirical}, 
and then samples can be directly merged for so-called mega-analysis.
The second type of batch correction assumes the existence of unknown UV factors,
in which case many methods \citep[e.g.][]{leek2007capturing, kang2008accurate, listgarten2010correction} 
have been developed to eliminate effects from unknown UV factors to improve DE analysis.
Since BayesMP takes \textit{p}-values from single studies as input, 
these methods can be easily adopted in each single study before implementing BayesMP.
However, it should be noted that over-correction can be a potential concern for any batch correction method, 
and one should use with caution \citep{jacob2016correcting}.
{\color{black} In addition, such batch correction is unnecessary if each study is from a single batch and there are no other hidden factors within each study. }

A prior for $\sigma_c^2$ can be given to better characterize the variability in $\sigma_c^2$ 
(e.g., truncated inverse gamma distribution; 
here, truncation such that $\sigma_c^2 \ge 1$ is a sufficient condition such that density function of \textit{Z}-statistics is monotone with respect to \textit{Z}).
However such a prior will make the Bayesian procedure lose conjugacy.
Therefore we fix $\sigma_c^2 = 1$ to keep the algorithm computationally efficient. \citet{ghosal1999posterior} illustrated that this procedure is equivalent to choosing the bandwidth parameter a priori in kernel density estimation, and established posterior consistency for it.

BayesMP is implemented in R calling C++.
The BayesMP package is publicly available at GitHub
\url{https://github.com/Caleb-Huo/BayesMP} and the authors' websites.

\appendix

\section*{Acknowledgment}
The authors sincerely thank the editor, the associate editor, and the reviewers for their constructive comments, which helped us improve the quality of this paper.  We also thank \citet{OhioSupercomputerCenter1987} for the computational resource.

\begin{supplement}
\stitle{Supplementary Information}
\slink[url]{http://url/to/supplementary.pdf}
\sdescription{Additional tables, figures, and text.}
\end{supplement}

\begin{supplement}
\stitle{Supplementary Excel File 1}
\slink[url]{http://url/to/excel1.xlsx}
\sdescription{Pathway information for the mouse metabolism application.}
\end{supplement}

\begin{supplement}
\stitle{Supplementary Excel File 2}
\slink[url]{http://url/to/excel2.xlsx}
\sdescription{Pathway information for the HIV transgenic rat application.}
\end{supplement}

\bibliographystyle{imsart-nameyear}
\bibliography{BayesianMeta}

\end{document}